\definecolor{light-gray}{gray}{0.75}
    \newcommand{\fig}[1]{Fig.~\ref{#1}}           
    \newcommand{\tab}[1]{Table~\ref{#1}}            
    \newcommand{\lem}[1]{Lemma~\ref{#1}}    
    \newcommand{\defref}[1]{Definition~\ref{#1}}
    \newcommand{\algo}[1]{Algorithm~\ref{#1}}
    \newcommand{\theoref}[1]{Theorem~\ref{#1}}
    \newcommand{\sect}[1]{Section~\ref{#1}}
    \newcommand{\stepref}[1]{Step~\ref{#1}}
    \newtheorem{mydef}{Definition}
    \theoremstyle{remark}
    \newtheorem*{myremark}{Remark}
    \newcommand{\mybinom}[2]{\left(\begin{array}{@{}c@{\,}} #1\\#2 \end{array}\right)}
	\DeclareMathOperator{\wt}{wt_H}
	\newcommand{\floor}[1]{\ensuremath{\left\lfloor #1 \right\rfloor}}
	\newcolumntype{C}{>{$}c<{$}} 
	\newcolumntype{L}{>{$}l<{$}} 
	\newcolumntype{M}[1]{>{$}m{#1} <{$}} 
	\newcommand{\code}[1]{\mathcal{C}_{#1}}
	\newcommand{\synd}[1]{\mathbf{S}_{#1}}	
	\newcommand{\rv}[1]{\mathbf{r}_{#1}}	
	\newcommand{\bl}[1]{\mathbf{B}_{#1}}	
	\newcommand{\sv}[1]{\mathbf{b}^{#1}}	
	\newcommand{\mask}[1]{\mathbf{M}_{#1}}	
	\newcommand{\field}[1]{\mathbb{F}_{#1}}	
	\newcommand{\matr}[3]{\mathbf{#1}_{#2 \times #3}}
	\newcommand{\berout}{\text{BER}_{out}}
	\newcommand{\berin}{\text{BER}_{in}}
\begin{document}


\title{Improved Decoding and Error Floor Analysis of Staircase Codes\thanks{L. Holzbaur's and A. Wachter-Zeh's work was
    supported by the Technical University of Munich--Institute for Advanced Study, funded by the German Excellence
    Initiative and European Union Seventh Framework Programme under Grant Agreement No. 291763 and the German Research
    Foundation (Deutsche Forschungsgemeinschaft, DFG) unter Grant No. WA3907/1-1.}
}


\author{Lukas Holzbaur \and Hannes Bartz \and Antonia Wachter-Zeh}

\authorrunning{Holzbaur, Bartz, Wachter-Zeh} 

\institute{Lukas Holzbaur \and Antonia Wachter-Zeh\at
  Technical University of Munich\\
  Institute for Communications Engineering\\
  Theresienstr. 90\\
  80333 Munich \\
  Tel.: +49-89-289 \{29052, 23495\}\\
  \email{\{lukas.holzbaur, antonia.wachter-zeh\}@tum.de} 
  \and Hannes Bartz \at
  German Aerospace Center \\
  Institute of Communications and Navigation \\
  Satellite Networks\\
  M\"unchner Stra\ss{}e 20\\
  82234 Oberpfaffenhofen-Wessling\\
  Tel.: +49-8153-282252\\
  Fax:+49-8153-282844\\
  \email{hannes.bartz@dlr.de} }

\date{Received: date / Accepted: date}

\maketitle

\begin{abstract}
  Staircase codes play an important role as error-correcting codes in optical communications. In this paper, a
  low-complexity method for resolving stall patterns when decoding staircase codes is described. Stall patterns are the
  dominating contributor to the error floor in the original decoding method. Our improvement is based on locating stall
  patterns by intersecting non-zero syndromes and flipping the corresponding bits. The approach effectively lowers the
  error floor and allows for a new range of block sizes to be considered for optical communications at a certain code
  rate or, alternatively, a significantly decreased error floor for the same block
  size. 
  Further, an improved error floor analysis is introduced which provides a more accurate estimation of the contributions
  to the error floor.  \keywords{Staircase codes \and Coding \and Error Floor \and FEC for optical communications}
  \subclass{94B05 \and 94B35}
\end{abstract}

\newpage

\section{Introduction}

Staircase codes were introduced by Smith \emph{et al.} in~\cite{stairfec} and are a powerful code construction based on
a binary Bose-Ray-Chaudhuri-Hocquenghem (BCH) component code, designed for error-correction in high-speed optical
communication systems. With performance close to the capacity of the binary symmetric channel (BSC) for high rates and
decoder complexity lower than a comparable low-density parity-check (LDPC) code, staircase codes provide a
cost-efficient alternative to soft decision decoding of LDPC codes. As shown in~\cite{staircase633}, staircase codes
perform well for a multitude of different parameters. However, the usability of staircase codes is limited by the
requirement of optical communication systems to guarantee an error floor below $10^{-15}$, which allows small block
sizes of the staircase code only at relatively low code rates.

Similar to trapping sets in decoding LDPC codes, certain constellations of errors, called \textit{stall patterns},
cannot be resolved by the component codes of the staircase code. A strategy that enables the decoder to resolve stall
patterns will improve the performance in the error floor region and potentially allow for more efficient decoding, as
smaller block sizes can be used. For the structurally closely related product, half-product, and braided codes, several
approaches for resolving stall patterns have been proposed. In~\cite{HPFM,jian} the resolving of stall patterns by
erasure decoding is considered. Compared to the approaches based on bit-flipping~\cite{patent,stapaproduct,IHPC} erasure
decoding has the advantage of only requiring one iteration. However, especially when considering large stall patterns,
it is shown to be outperformed by bit-flipping. To evaluate the performance of a code with given parameters, an analysis
of the error floor is required. While~\cite{stapaproduct,IHPC} offer an analysis based on exhaustive search, this work
extends the analytical approach of \cite{stairfec} by relating the problem of counting stall patterns to the numerical
problem of finding the number of binary matrices with certain row and column weight~\cite{mat01} to obtain a
significantly more accurate estimation.

In this paper, we present an improved decoder for staircase codes, which is able to locate stall patterns and resolves
many of them by adapting and extending the concept of bit-flipping~\cite{patent,stapaproduct,IHPC}. We show that
bit-flipping can guarantee to correct \emph{all} stall patterns when the number of involved columns and rows is each
smaller than the minimum distance of the component BCH code and no undetected error events occur. Another contribution
of this work is a new estimation of the error floor that is significantly more accurate than the one
from~\cite{stairfec}. Finally, we present conjectures on the performance obtained by combining estimation and simulation
for a staircase code with a quarter of the block size compared to the scheme of~\cite{stairfec}. These show that an
output bit error rate of~${\berout = 10^{-15}}$ is reached at~$\sim 1$~dB from BSC capacity at the cost of a small rate
loss ($\frac{236}{255}$ compared to~$\frac{239}{255}$).
This scheme is estimated to achieve a net coding gain (NCG) of $9.16$dB at a $\berout$ of $10^{-15}$.

Parts of this work have recently been presented at WCC 2017 \cite{Holzbaur2017}. In this paper, we give more details
on the process of resolving stall patterns and on the error floor analysis. Further we introduce a
computationally less complex method for estimating the exact number of stall patterns, based on a combination of
analysis and simulation.

\section{Staircase Codes} \label{sec:staircasecodes}

\subsection{Encoding} \label{subsec:stcencoding}

\begin{figure}
  \begin{center}

\def\x{1.5} 

\begin{tikzpicture}

	\fill[blue!40!white] (\x*2.75,\x*4) rectangle (\x*3,\x*3);
	\fill[blue!40!white] (\x*2,\x*2.25) rectangle (\x*3,\x*2);
	\fill[blue!40!white] (\x*3.75,\x*3) rectangle (\x*4,\x*2);
	\fill[blue!40!white] (\x*3,\x*1.25) rectangle (\x*4,\x*1);
	\fill[blue!40!white] (\x*4.75,\x*2) rectangle (\x*5,\x*1);

	\draw (\x*1,\x*4) rectangle (\x*2,\x*3);
	\draw (\x*2,\x*4) rectangle (\x*3,\x*3);
	\draw (\x*2,\x*3) rectangle (\x*3,\x*2);
	\draw (\x*3,\x*3) rectangle (\x*4,\x*2);
	\draw (\x*3,\x*2) rectangle (\x*4,\x*1);
	\draw (\x*4,\x*2) rectangle (\x*5,\x*1);

	\node[draw=none,font=\LARGE] at (\x*1.5,\x*3.5)  {$B_0$};
	\node[draw=none,font=\LARGE] at (\x*2.5,\x*3.5)  {$B_1$};
	\node[draw=none,font=\LARGE] at (\x*2.5,\x*2.5)  {$B_2^T$};
	\node[draw=none,font=\LARGE] at (\x*3.5,\x*2.5)  {$B_3$};
	\node[draw=none,font=\LARGE] at (\x*3.5,\x*1.5)  {$B_4^T$};
	\node[draw=none,font=\LARGE] at (\x*4.5,\x*1.5)  {$B_5$};
	
	\draw (\x*4,\x*1) -- (\x*4,\x*0.7);
	\draw (\x*5,\x*1) -- (\x*5,\x*0.7);	
	\draw (\x*4.5,\x*0.9) node[circle,fill,inner sep=\x*0.2pt]{};
	\draw (\x*4.5,\x*0.835) node[circle,fill,inner sep=\x*0.2pt]{};
	\draw (\x*4.5,\x*0.77) node[circle,fill,inner sep=\x*0.2pt]{};

	\fill[blue!40!white,anchor=west] (\x*3.75,\x*3.25) rectangle (\x*3.9,\x*3.4);
	\node[draw=none,anchor=west] at (\x*3.9,\x*3.31)  {Parity bits};
	
	\draw[thick,<-] (\x*1,\x*4.25) -- (\x*1.85,\x*4.25);
	\draw[thick,<-] (\x*3,\x*4.25) -- (\x*2.15,\x*4.25);
	\node[draw=none] at (\x*2,\x*4.25) {$n$};
	
	\draw[thick,<-] (\x*1,\x*4.1) -- (\x*1.725,\x*4.1);
	\draw[thick,<-] (\x*2.75,\x*4.1) -- (\x*2.025,\x*4.1);
	\node[draw=none] at (\x*1.8725,\x*4.1) {$k$};
	
	\draw[thick,<-] (\x*0.8,\x*4) -- (\x*0.8,\x*3.65);
	\draw[thick,<-] (\x*0.8,\x*3) -- (\x*0.8,\x*3.35);
	\node[draw=none] at (\x*0.8,\x*3.5) {$m$};
	
	\draw[thick,<-] (\x*1,\x*2.8) -- (\x*1.35,\x*2.8);
	\draw[thick,<-] (\x*2,\x*2.8) -- (\x*1.65,\x*2.8);
	\node[draw=none] at (\x*1.5,\x*2.8) {$m$};

	\draw[thick,dashed] (\x*1,\x*3.8) -- (\x*3,\x*3.8);
	\node[draw=none,anchor=west] at (\x*3.15,\x*3.8) {$\in$ BCH$(n,k)$};

	\draw[thick,dashed] (\x*2.2,\x*4) -- (\x*2.2,\x*2);
	\node[draw=none,anchor=west,rotate=270] at (\x*2.2,\x*1.85) {$\in$ BCH$(n,k)$};
	
	\node[draw=none] at (\x*2,\x*0.6) {};
	\node[draw=none] at (\x*0.5,\x*4.5) {};

\end{tikzpicture}
  \end{center}
  \caption{Illustration of the structure of a staircase code.}
  \label{fig:stcstructure}
\end{figure}
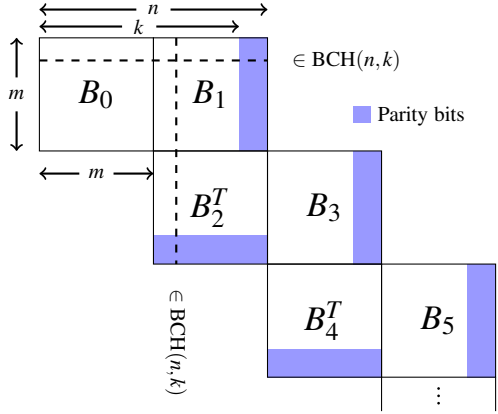

Staircase codes are encoded block-wise, where each block~$\bl{i}$ is a binary $m \times m$ matrix. The encoding
procedure is based on a component code of length $n=2m$, dimension $k>m$ and error-correcting capability $t$. In
\cite{stairfec,Hager20172} and in this work, extended BCH codes are used as component codes. A BCH code is a cyclic code
and is therefore given by all codewords $c(x)$ for which $g(x) | c(x)$, where $c(x)$ is the polynomial representation of
a codeword and $g(x)$ is the generating polynomial. We refer to the code as an extended code if it has even minimum
distance $d\geq 2t+2$, i.e., the code contains only codewords of even weight, which is equivalent to requiring
$(x+1) | g(x)$. Then every codeword can be written as
\begin{align*}
  c(x) = u(x)(x+1)g'(x) = x\cdot u(x) g'(x) + u(x) g'(x),
\end{align*}
which is a sum of two polynomials of the same weight and therefore of even weight. This code is used to encode every row
and column in systematic form.  It follows that, except for the first block which is initialized to all-zeros, every
block consists of $m(k-m) = m\cdot k'$ information bits and $m(n-k) = m(m-k')$ redundancy bits, giving the code rate
\begin{equation*}
  R = \frac{m(k-m)}{m(k-m)+m(n-k)} = \frac{k-m}{n-m} = \frac{k'}{m} .
\end{equation*}
The blocks of a staircase code are defined such that each row of $\left[\bl{i-1}^T \;\; \bl{i}\right]$ is a codeword of
the BCH component code, for all $i \geq 1$. Encoding of block $\bl{i}$ is done by taking the transpose of $\bl{i-1}$ and
appending the $m(k-m)$ information bits of block $\bl{i}$ to obtain an $m \times k$ matrix. Then each row is
encoded with the systematic BCH code to obtain the encoded block $\bl{i}$.  An illustration of the staircase code
structure is given in \fig{fig:stcstructure}.

\subsection{Sliding-Window Decoding} \label{subsec:stcdecoding}

The decoding algorithm of \cite{stairfec} is based on multiple iterations of hard-decision BCH decoders operating on a
sliding window of~$W$ blocks. A window comprised of $\bl{i}$ to $\bl{i+W-1}$ is decoded by first decoding the received
words spanning the rows of $\left[\bl{i}^T \;\; \bl{i+1}\right]$, followed by the codewords spanning the rows of
$\left[\bl{i+1}^T \;\; \bl{i+2}\right]$, until the last block $\bl{i+W-1}$ of the window is reached. Then, the decoder
returns to $\left[\bl{i}^T \;\; \bl{i+1}\right]$ and the process is repeated. If no more errors are detected in the last
block of the window or a fixed maximum number of iterations $v_{\max}$ is reached, the decoder declares $\bl{i}$ as
decoded, slides the window by one block and repeats the process for the new window comprised of $\bl{i+1}$ to
$\bl{i+W}$.

Shortly after our work a new decoding algorithm has been proposed \cite{Hager2017,Hager20172} that uses so-called anchor
codewords, which have likely been decoded correctly, in order to avoid the undesirable undetected error events, also
called miscorrections, in the decoding process. This significantly improves performance in the waterfall region compared
to the decoder proposed in \cite{stairfec}.

\subsection{Stall Patterns and Known Error Floor Analysis} \label{subsec:analysisold}
An $[n,k]$ BCH component code with minimum distance $d_{\min}$ can detect any $d_{\min} -1$ errors and has (unique) error-correcting capability~$t = \floor{\frac{d_{\min} -1}{2}}$, \emph{i.e.}, if $t+1$ errors affect the codeword, the decoder is not able to resolve them. Similar to product codes, the minimum distance between two valid semi-infinite staircase code codewords (comparable to the free distance of convolutional codes) is given by $d_{\code{}} = d_{\min}^2$, following from linearity and the minimal weight of a valid staircase code codeword. As for product and convolutional codes, the minimum/free distance is not a good measure for the error correction capability of the code. However, when considering the error floor, the minimum distance gives a lower bound on the weight of a theoretically undetectable error pattern, which will be important in the analysis presented in this work.

A stall pattern of a staircase code is a set of erroneous bit positions such that each erroneous row and column contains
at least $t+1$ erroneous bits. It follows that the component codes are unable to resolve these errors and the pattern
cannot be resolved, despite possibly having less than $\floor{\frac{d_{\code{}}-1}{2}}$ errors. The minimal number of
rows $K$ and columns $L$ involved in such a pattern is $K = L = t+1$ and the minimal number of errors is
$\epsilon = (t+1)^2$. For such \emph{minimal stall patterns} (compare \fig{fig:minstpa}), every intersection of an
involved row and an involved column is an erroneous bit.

If more than $t+1$ rows or columns are part of the stall pattern, it is possible that not every bit in the intersection
of involved rows and columns is in error.  The weight of the error vectors of each involved row or column has to be at
least $t+1$ and therefore, the number of errors $\epsilon$ in a $(K,L)$ stall pattern is bounded by
\begin{equation}\label{eq:epsbound}
  \epsilon_{\min} \stackrel{\triangle}{=}  \max\left\{K,L\right\} \cdot (t+1) \leq \epsilon \leq K \cdot L .
\end{equation}
\fig{fig:nonminstpa} shows a non-minimal $(4,4)$ stall pattern with $t=2$ and $\epsilon = \epsilon_{\min} = 12$.

\begin{figure}
  \centering
  \begin{minipage}{0.45\textwidth}
    \centering \def\x{1.5} 

\tikzset{cross/.style={cross out, draw=black, fill=none, minimum size=2*(#1-\pgflinewidth), inner sep=0pt, outer sep=0pt}, cross/.default={\x*2pt}}

\begin{tikzpicture}

	\node[draw=none] at (\x*1.5,\x*2.5)  {\color{light-gray}$\bl{i-1}^T$};
	\node[draw=none] at (\x*2.5,\x*2.5)  {\color{light-gray}$\bl{i}$};
	\node[draw=none] at (\x*2.5,\x*1.5)  {\color{light-gray}$\bl{i+1}^T$};
	\node[draw=none] at (\x*3.5,\x*1.5)  {\color{light-gray}$\bl{i+2}$};

	\draw (\x*1,\x*3) rectangle (\x*2,\x*2);
	\draw (\x*2,\x*3) rectangle (\x*3,\x*2);
	\draw (\x*2,\x*2) rectangle (\x*3,\x*1);
	\draw (\x*3,\x*2) rectangle (\x*4,\x*1);
	
	\draw (\x*1,\x*3) -- (\x*1,\x*3.3);
	\draw (\x*2,\x*3) -- (\x*2,\x*3.3);	
	\draw (\x*1.5,\x*3.1) node[circle,fill,inner sep=\x*0.2pt]{};
	\draw (\x*1.5,\x*3.165) node[circle,fill,inner sep=\x*0.2pt]{};
	\draw (\x*1.5,\x*3.23) node[circle,fill,inner sep=\x*0.2pt]{};

	\draw (\x*3,\x*1) -- (\x*3,\x*0.7);
	\draw (\x*4,\x*1) -- (\x*4,\x*0.7);	
	\draw (\x*3.5,\x*0.9) node[circle,fill,inner sep=\x*0.2pt]{};
	\draw (\x*3.5,\x*0.835) node[circle,fill,inner sep=\x*0.2pt]{};
	\draw (\x*3.5,\x*0.77) node[circle,fill,inner sep=\x*0.2pt]{};

	\draw[dashed] (\x*1,\x*2.75) -- (\x*3,\x*2.75);
	\draw[dashed] (\x*2,\x*1.2) -- (\x*4,\x*1.2);
	\draw[dashed] (\x*2,\x*1.8) -- (\x*4,\x*1.8);
	
	\draw[dashed] (\x*2.4,\x*3) -- (\x*2.4,\x*1);
	\draw[dashed] (\x*2.6,\x*3) -- (\x*2.6,\x*1);
	\draw[dashed] (\x*2.8,\x*3) -- (\x*2.8,\x*1);
	
	\draw (\x*2.4,\x*1.2) node[cross]{};
	\draw (\x*2.4,\x*1.8) node[cross]{};
	\draw (\x*2.4,\x*2.75) node[cross]{};
	\draw (\x*2.6,\x*1.2) node[cross]{};
	\draw (\x*2.6,\x*1.8) node[cross]{};
	\draw (\x*2.6,\x*2.75) node[cross]{};
	\draw (\x*2.8,\x*1.2) node[cross]{};
	\draw (\x*2.8,\x*1.8) node[cross]{};

	\draw (\x*2.8,\x*2.75) node[cross]{};

\end{tikzpicture}
    \caption{Minimal stall pattern of size ${(K=3,L=3)}$ for a $t=2$ error-correcting code.}
    \label{fig:minstpa}
  \end{minipage}
  \hspace{0.05\textwidth}
  \begin{minipage}{0.45\textwidth}
    \centering \def\x{1.5} 

\tikzset{cross/.style={cross out, draw=black, fill=none, minimum size=2*(#1-\pgflinewidth), inner sep=0pt, outer sep=0pt}, cross/.default={\x*2pt}}

\begin{tikzpicture}

	\node[draw=none] at (\x*1.5,\x*2.5)  {\color{light-gray}$\bl{i-1}^T$};
	\node[draw=none] at (\x*2.5,\x*2.5)  {\color{light-gray}$\bl{i}$};
	\node[draw=none] at (\x*2.5,\x*1.5)  {\color{light-gray}$\bl{i+1}^T$};
	\node[draw=none] at (\x*3.5,\x*1.5)  {\color{light-gray}$\bl{i+2}$};

	\draw (\x*1,\x*3) rectangle (\x*2,\x*2);
	\draw (\x*2,\x*3) rectangle (\x*3,\x*2);
	\draw (\x*2,\x*2) rectangle (\x*3,\x*1);
	\draw (\x*3,\x*2) rectangle (\x*4,\x*1);
	
	\draw (\x*1,\x*3) -- (\x*1,\x*3.3);
	\draw (\x*2,\x*3) -- (\x*2,\x*3.3);	
	\draw (\x*1.5,\x*3.1) node[circle,fill,inner sep=\x*0.2pt]{};
	\draw (\x*1.5,\x*3.165) node[circle,fill,inner sep=\x*0.2pt]{};
	\draw (\x*1.5,\x*3.23) node[circle,fill,inner sep=\x*0.2pt]{};
	
	\draw (\x*3,\x*1) -- (\x*3,\x*0.7);
	\draw (\x*4,\x*1) -- (\x*4,\x*0.7);	
	\draw (\x*3.5,\x*0.9) node[circle,fill,inner sep=\x*0.2pt]{};
	\draw (\x*3.5,\x*0.835) node[circle,fill,inner sep=\x*0.2pt]{};
	\draw (\x*3.5,\x*0.77) node[circle,fill,inner sep=\x*0.2pt]{};

	\draw[dashed] (\x*1,\x*2.3) -- (\x*3,\x*2.3);
	\draw[dashed] (\x*1,\x*2.75) -- (\x*3,\x*2.75);
	\draw[dashed] (\x*2,\x*1.2) -- (\x*4,\x*1.2);
	\draw[dashed] (\x*2,\x*1.8) -- (\x*4,\x*1.8);
	
	\draw[dashed] (\x*2.1,\x*3) -- (\x*2.1,\x*1);
	\draw[dashed] (\x*2.3,\x*3) -- (\x*2.3,\x*1);
	\draw[dashed] (\x*2.65,\x*3) -- (\x*2.65,\x*1);
	\draw[dashed] (\x*2.9,\x*3) -- (\x*2.9,\x*1);
	
	\draw (\x*2.1,\x*1.8) node[cross]{};
	\draw (\x*2.1,\x*2.3) node[cross]{};
	\draw (\x*2.1,\x*2.75) node[cross]{};
	\draw (\x*2.3,\x*1.2) node[cross]{};
	\draw (\x*2.3,\x*1.8) node[cross]{};
	\draw (\x*2.3,\x*2.75) node[cross]{};
	\draw (\x*2.65,\x*1.2) node[cross]{};
	\draw (\x*2.65,\x*1.8) node[cross]{};
	\draw (\x*2.65,\x*2.3) node[cross]{};
	\draw (\x*2.9,\x*1.2) node[cross]{};
	\draw (\x*2.9,\x*2.3) node[cross]{};
	\draw (\x*2.9,\x*2.75) node[cross]{};

\end{tikzpicture}
    \caption{Non-minimal stall pattern of size ${(K=4,L=4)}$ for a $t\leq2$ error-correcting code.}
    \label{fig:nonminstpa}
  \end{minipage}

\end{figure}

The error floor estimation given in \cite{stairfec} is based on the assumption that the dominating contributors to the
error floor are stall patterns. It is obtained by enumerating the number of possible stall patterns and weighting each
pattern with the probability that the corresponding positions are in error. A stall pattern is associated with~$\bl{i}$
which has lowest index that contains at least one of its errors. The number of combinations of $K$ rows and $L$ columns
such that the stall pattern belongs to a certain block is
\begin{equation} \label{eq:choicesroco} A_{K,L} = \mybinom{m}{L}\cdot \sum_{a=1}^{K} \mybinom{m}{a} \cdot
  \mybinom{m}{K-a} .
\end{equation}
Given the rows and columns, the number of different ways to distribute errors within their intersections is denoted by
$N_{K,L}^\epsilon$ and bounded from above by (see \cite{stairfec})

\begin{equation} \label{eq:amerrdist} N_{K,L}^\epsilon \leq \hat{N}_{K,L}^{\epsilon} =
  \mybinom{\min\left\{K,L\right\}}{t+1}^{\max\left\{K,L\right\}} \cdot \mybinom{K \cdot L - \epsilon_{min}}{\epsilon -
    \epsilon_{\min}} .
\end{equation}

With \eqref{eq:choicesroco} and \eqref{eq:amerrdist}, the contribution of $(K,L)$-stall patterns to the $\berout$ in the
error floor region can be overbounded by weighing each pattern with the probability that the corresponding positions are
in error, to obtain
\begin{equation} \label{eq:bounderrflo} \sum_{\epsilon=(t+1)\cdot \max \left\{K,L\right\}}^{K \cdot L}
  \frac{\epsilon}{m^2} \cdot A_{K,L} \cdot \hat{N}_{K,L}^{\epsilon} \cdot (p + \xi)^\epsilon ,
\end{equation}
where $p$ is the crossover probability of the BSC and $\xi$ is an additional correction factor adjusting for the
occurrence of undetected error events during the iterations of the decoding process. Unfortunately, it is difficult to
give an analytic bound on $\xi$ and current approaches rely on determining the appropriate value via estimation
and/or simulation~\cite{stairfec,Hager20172}.

\section{Resolving Stall Patterns} \label{sec:improdec}

\subsection{The Bit-Flip Operation} \label{subsec:loerrflo}

Assume that all errors that are not part of a stall pattern are resolved by the regular sliding-window decoding
procedure (see \sect{subsec:stcdecoding}) and hence only stall patterns remain.

In a minimal stall pattern, each involved row and column contains exactly $t+1$ erroneous bits which results in
a non-zero syndrome for the component code with distance $d \geq 2t+2$. Thus, a minimal stall pattern can be resolved by
flipping each bit at the intersection of the words with non-zero syndromes.

\begin{mydef}[Bit-Flip]\label{def:syndvec}
  Consider a staircase code with a $t$ error-correcting component code and let $\rv{0}^i,...,\rv{m-1}^i$ be
  the received words corresponding to component codewords with redundancy bits in $\bl{i+1}$. Let
  $\synd{\rv{j}^i}$ be the syndrome of $\rv{j}^i$. Define the elements of the vector
  $\sv{i} \in \mathbb{F}_2^{m\times1}$ for $0 \leq j \leq m-1$ by:
  \begin{equation} \label{eq:syndvec} \sv{i} (j) = \left\{
      \begin{array}{ll}
	1, & \text{if} \;\;\; \synd{\rv{j}^i} \neq \mathbf{0}\\
	0, & \text{else}.
      \end{array} \right.
  \end{equation}
  Let $\mask{i} = \sv{i-1} \cdot (\sv{i})^T \in \mathbb{F}_2^{m \times m}$ be the \textbf{masking matrix} and let
  $\bl{i}^{(z)}$ be block $\bl{i}$ after $z$ decoding iterations. Define the operation \textbf{bit-flip} as
  \begin{equation}
    \bl{i}^{(z+1)} = \bl{i}^{(z)} + \mask{i}. \label{eq:mask}
  \end{equation}
\end{mydef}
The matrix $\mask{i}$ is a binary $m\times m$ block which is non-zero only in the positions involved in a stall pattern
of given size and maximum weight. Assuming no miscorrections, \emph{i.e.}, the syndrome is non-zero for every involved row and
column, it covers all of their intersections.

\subsection{Analysis of Bit-Flip without Undetected Error Events} \label{subsec:resnoundet}

In this section, we analyze the performance of the bit-flip operation under the assumption that no \emph{undetected
  error events} occur.  By undetected error event, we refer to an \emph{incorrectly} decoded component word with
all-zero syndrome.

In general, for a non-minimal stall pattern of $\bl{i}^{(z)}$ with masking matrices $\mask{i}$ and $\mask{i+1}$ as
in~\defref{def:syndvec}, all positions involved in the stall pattern are covered, as
$\wt(\mask{i}) + \wt(\mask{i+1}) = K\cdot L$.  This mask therefore reconstructs a stall pattern of correct size, but of
maximum weight $K\cdot L$ (compare \eqref{eq:epsbound}) which might introduce $\bar{\epsilon}$ new errors after the
bit-flipping, where
\begin{equation} \label{eq:epsbarbound} \bar{\epsilon} = K \cdot L - \epsilon.
\end{equation}
For example, for the $(4,4)$ non-minimal stall pattern of~\fig{fig:resnonminstpa}, the bit-flip operation resolves the
12 erroneous bits of the stall pattern, but introduces 4 new errors, as indicated by red markers. These 4 new errors can
then be corrected by a usual sliding-window decoding iteration, since the weight in each column and row is less than
$t=2$.
\begin{figure}
  \centering

	\begin{minipage}{0.35\textwidth}
          \def\x{1.5} 

\tikzset{cross/.style={cross out, draw=black, fill=none, minimum size=2*(#1-\pgflinewidth), inner sep=0pt, outer sep=0pt}, cross/.default={\x*2pt}}

\begin{tikzpicture}

	\node[draw=none] at (\x*1.5,\x*2.5)  {\color{light-gray}$\bl{i-1}^T$};
	\node[draw=none] at (\x*2.5,\x*2.5)  {\color{light-gray}$\bl{i}$};
	\node[draw=none] at (\x*2.5,\x*1.5)  {\color{light-gray}$\bl{i+1}^T$};
	\node[draw=none] at (\x*3.5,\x*1.5)  {\color{light-gray}$\bl{i+2}$};

	\draw (\x*1,\x*3) rectangle (\x*2,\x*2);
	\draw (\x*2,\x*3) rectangle (\x*3,\x*2);
	\draw (\x*2,\x*2) rectangle (\x*3,\x*1);
	\draw (\x*3,\x*2) rectangle (\x*4,\x*1);
	
	\draw (\x*1,\x*3) -- (\x*1,\x*3.3);
	\draw (\x*2,\x*3) -- (\x*2,\x*3.3);	
	\draw (\x*1.5,\x*3.1) node[circle,fill,inner sep=\x*0.2pt]{};
	\draw (\x*1.5,\x*3.165) node[circle,fill,inner sep=\x*0.2pt]{};
	\draw (\x*1.5,\x*3.23) node[circle,fill,inner sep=\x*0.2pt]{};
	
	\draw (\x*3,\x*1) -- (\x*3,\x*0.7);
	\draw (\x*4,\x*1) -- (\x*4,\x*0.7);	
	\draw (\x*3.5,\x*0.9) node[circle,fill,inner sep=\x*0.2pt]{};
	\draw (\x*3.5,\x*0.835) node[circle,fill,inner sep=\x*0.2pt]{};
	\draw (\x*3.5,\x*0.77) node[circle,fill,inner sep=\x*0.2pt]{};

	\draw[dashed] (\x*1,\x*2.3) -- (\x*3,\x*2.3);
	\draw[dashed] (\x*1,\x*2.75) -- (\x*3,\x*2.75);
	\draw[dashed] (\x*2,\x*1.2) -- (\x*4,\x*1.2);
	\draw[dashed] (\x*2,\x*1.8) -- (\x*4,\x*1.8);
	
	\draw[dashed] (\x*2.1,\x*3) -- (\x*2.1,\x*1);
	\draw[dashed] (\x*2.3,\x*3) -- (\x*2.3,\x*1);
	\draw[dashed] (\x*2.65,\x*3) -- (\x*2.65,\x*1);
	\draw[dashed] (\x*2.9,\x*3) -- (\x*2.9,\x*1);
	
	\draw (\x*2.1,\x*1.8) node[cross]{};
	\draw (\x*2.1,\x*2.3) node[cross]{};
	\draw (\x*2.1,\x*2.75) node[cross]{};
	\draw (\x*2.3,\x*1.2) node[cross]{};
	\draw (\x*2.3,\x*1.8) node[cross]{};
	\draw (\x*2.3,\x*2.75) node[cross]{};
	\draw (\x*2.65,\x*1.2) node[cross]{};
	\draw (\x*2.65,\x*1.8) node[cross]{};
	\draw (\x*2.65,\x*2.3) node[cross]{};
	\draw (\x*2.9,\x*1.2) node[cross]{};
	\draw (\x*2.9,\x*2.3) node[cross]{};
	\draw (\x*2.9,\x*2.75) node[cross]{};

\end{tikzpicture}
	\end{minipage}
	\hspace{0.05\textwidth} $\stackrel{\text{bit-flip}}{\Rightarrow}$ \hspace{0.03\textwidth}
	\begin{minipage}{0.35\textwidth}
          \def\x{1.5} 

\tikzset{cross/.style={cross out, draw=red, fill=none, minimum size=2*(#1-\pgflinewidth), inner sep=0pt, outer sep=0pt}, cross/.default={\x*2pt}}

\begin{tikzpicture}

	\node[draw=none] at (\x*1.5,\x*2.5)  {\color{light-gray}$\bl{i-1}^T$};
	\node[draw=none] at (\x*2.5,\x*2.5)  {\color{light-gray}$\bl{i}$};
	\node[draw=none] at (\x*2.5,\x*1.5)  {\color{light-gray}$\bl{i+1}^T$};
	\node[draw=none] at (\x*3.5,\x*1.5)  {\color{light-gray}$\bl{i+2}$};

	\draw (\x*1,\x*3) rectangle (\x*2,\x*2);
	\draw (\x*2,\x*3) rectangle (\x*3,\x*2);
	\draw (\x*2,\x*2) rectangle (\x*3,\x*1);
	\draw (\x*3,\x*2) rectangle (\x*4,\x*1);
	
	\draw (\x*1,\x*3) -- (\x*1,\x*3.3);
	\draw (\x*2,\x*3) -- (\x*2,\x*3.3);	
	\draw (\x*1.5,\x*3.1) node[circle,fill,inner sep=\x*0.2pt]{};
	\draw (\x*1.5,\x*3.165) node[circle,fill,inner sep=\x*0.2pt]{};
	\draw (\x*1.5,\x*3.23) node[circle,fill,inner sep=\x*0.2pt]{};
	
	\draw (\x*3,\x*1) -- (\x*3,\x*0.7);
	\draw (\x*4,\x*1) -- (\x*4,\x*0.7);	
	\draw (\x*3.5,\x*0.9) node[circle,fill,inner sep=\x*0.2pt]{};
	\draw (\x*3.5,\x*0.835) node[circle,fill,inner sep=\x*0.2pt]{};
	\draw (\x*3.5,\x*0.77) node[circle,fill,inner sep=\x*0.2pt]{};

	\draw[dashed] (\x*1,\x*2.3) -- (\x*3,\x*2.3);
	\draw[dashed] (\x*1,\x*2.75) -- (\x*3,\x*2.75);
	\draw[dashed] (\x*2,\x*1.2) -- (\x*4,\x*1.2);
	\draw[dashed] (\x*2,\x*1.8) -- (\x*4,\x*1.8);
	
	\draw[dashed] (\x*2.1,\x*3) -- (\x*2.1,\x*1);
	\draw[dashed] (\x*2.3,\x*3) -- (\x*2.3,\x*1);
	\draw[dashed] (\x*2.65,\x*3) -- (\x*2.65,\x*1);
	\draw[dashed] (\x*2.9,\x*3) -- (\x*2.9,\x*1);
	
	\draw (\x*2.1,\x*1.2) node[cross]{};
	\draw (\x*2.3,\x*2.3) node[cross]{};
	\draw (\x*2.65,\x*2.75) node[cross]{};
	\draw (\x*2.9,\x*1.8) node[cross]{};

\end{tikzpicture}
	\end{minipage}
	\caption{Bit-flip operation applied to a non-minimal $(4,4)$ stall pattern of a staircase code with $t=2$
          error-correcting component code. Since the conditions $K < 2(t+1)$ and $L<2(t+1)$ hold, the errors inserted by
          the bit-flip operation can be resolved by decoding of the component codes.}

	\label{fig:resnonminstpa}
      \end{figure}

\begin{theorem}[Guaranteed Resolving of Stall Patterns]\label{theo:guarantee-solving}
  Consider a staircase code with an extended BCH component code of minimum distance $d_{\min} = 2t+2$.  Assume that the
  sliding-window decoder has corrected all errors except for stall patterns with $K, L < 2t+2$.  Then, the
  \textbf{bit-flip} operation from Definition~$\ref{def:syndvec}$ and a single normal sliding-window iteration correct
  all these stall patterns if no undetected error events occur.
\end{theorem}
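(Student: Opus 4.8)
The plan is to follow the residual error pattern through the two stages of the procedure — the bit-flip and the ensuing single sliding-window pass — and to show that the bound $K,L<2t+2$ is \emph{exactly} the condition under which the bit-flip leaves at most $t$ errors in every component word it disturbs, so that the last pass can clean them up.

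First I would establish that the masking matrices locate the pattern precisely. By assumption the only surviving errors form stall patterns with $K,L\le 2t+1$, so a component word carries errors only if it is one of the $K$ involved rows or $L$ involved columns; in the staircase geometry these rows split across levels $i-1$ and $i+1$ while the columns sit at level $i$ (cf.\ the splitting sum in~\eqref{eq:choicesroco}). Each involved word has error weight between $t+1$ and $\max\{K,L\}\le 2t+1$, hence strictly below $d_{\min}=2t+2$; since the all-zero word is the only codeword of weight $<d_{\min}$, every involved word is a non-codeword and therefore has non-zero syndrome, while every clean word has zero syndrome. Here the no-undetected-error hypothesis is what forbids an erroneous word from masquerading as a codeword. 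Consequently $\sv{i-1},\sv{i},\sv{i+1}$ flag exactly the involved words, and $\mask{i}=\sv{i-1}(\sv{i})^T$ together with $\mask{i+1}$ invert precisely the $K\cdot L$ intersections of involved rows and columns and nothing else, consistent with $\wt(\mask{i})+\wt(\mask{i+1})=K\cdot L$.

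The heart of the argument is a word-by-word weight count after the flip. Fix an involved row: among its $L$ intersection positions it carried $w\ge t+1$ errors, and the flip inverts all $L$ of them, leaving $L-w\le(2t+1)-(t+1)=t$ errors; the symmetric count, using $K\le 2t+1$ and the fact that a column's $K$ intersections are split between $\bl{i}$ (via $\mask{i}$) and $\bl{i+1}$ (via $\mask{i+1}$), bounds every involved column by $t$ errors as well. This is the point at which the threshold $2t+2=d_{\min}$ is used, and it is the step I expect to require the most care: one must verify that the two masks \emph{jointly} invert the full rectangle of intersections — in particular all $K$ intersection bits of each column codeword, which lie in two different blocks — and that no bit outside the involved words is touched, so that the per-word counts $L-w$ and $K-w$ are valid and the clean words remain error-free.

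Finally I would argue that a single normal iteration finishes the job. After the flip every involved word holds at most $t$ errors and every other word is error-free; since each residual error sits at an intersection it lies inside an involved row, so decoding all rows once removes every remaining error. With at most $t<d_{\min}$ errors per word, bounded-distance decoding returns the unique codeword within radius $t$, which is the transmitted one, so no miscorrection occurs and no clean word is disturbed — the no-undetected-error condition is thus automatically met in this final stage. Applying the above to each of the finitely many stall patterns present completes the proof.
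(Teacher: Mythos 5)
Your proof is correct and takes essentially the same route as the paper's: the decisive step is the identical per-word weight count, namely that flipping all intersection bits of an involved word carrying $w\geq t+1$ errors leaves at most $\bigl(2(t+1)-1\bigr)-(t+1)=t$ errors, which a single normal sliding-window iteration then corrects (this is exactly \eqref{eq:remerrs}). Your supplementary checks---that each involved word has non-zero syndrome because its error weight lies in $[t+1,\,2t+1]$, strictly below $d_{\min}$, that the two masks jointly cover all $K\cdot L$ intersections split across $\bl{i}$ and $\bl{i+1}$, and that no clean word is disturbed---make explicit what the paper leaves implicit in the discussion surrounding Definition~\ref{def:syndvec}, rather than constituting a different argument.
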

\begin{proof}
  When $K, L < 2(t+1)$, the weight of every row $\rv{}$ of the stall pattern is bounded by $t+1 \leq \wt(\rv{}) \leq L$,
  where the lower bound is given by the definition of stall patterns. When the $L$ involved bits of each row are
  flipped, its weight is bounded by
  \begin{equation} \label{eq:remerrs} \wt(\bar{\rv{}}) = L - \wt(\rv{}) \leq (2(t+1)-1) - (t+1) \leq t ,
  \end{equation}
  which can be corrected by the component codes in a normal sliding-window iteration.
\end{proof}
The restrictions on $K$ and $L$ imply that the error weight in each row or column is at most $d_{\min}-1$ and it follows that erasure decoding could be applied by treating every involved column (row) as an erasure. By guaranteeing the resolving for these restrictions, \theoref{theo:guarantee-solving} shows that bit-flipping offers at least the same performance in terms of stall pattern resolving capability as an approach based on erasure decoding.

For larger $K$ and $L$, the restriction of \eqref{eq:remerrs} no longer holds in general. \fig{fig:nonmininvert} depicts
a stall pattern for which the application of the bit-flip operation from \defref{def:syndvec} leads to another stall
pattern.
\begin{figure}
  \centering

  \begin{minipage}{0.35\textwidth}
    \def\x{1.5} 

\tikzset{cross/.style={cross out, draw=black, fill=none, minimum size=2*(#1-\pgflinewidth), inner sep=0pt, outer sep=0pt}, cross/.default={\x*2pt}}

\begin{tikzpicture}

	\node[draw=none] at (\x*1.5,\x*2.5)  {\color{light-gray}$\bl{i-1}^T$};
	\node[draw=none] at (\x*2.5,\x*2.5)  {\color{light-gray}$\bl{i}$};
	\node[draw=none] at (\x*2.5,\x*1.5)  {\color{light-gray}$\bl{i+1}^T$};
	\node[draw=none] at (\x*3.5,\x*1.5)  {\color{light-gray}$\bl{i+2}$};

	\draw (\x*1,\x*3) rectangle (\x*2,\x*2);
	\draw (\x*2,\x*3) rectangle (\x*3,\x*2);
	\draw (\x*2,\x*2) rectangle (\x*3,\x*1);
	\draw (\x*3,\x*2) rectangle (\x*4,\x*1);
	
	\draw (\x*1,\x*3) -- (\x*1,\x*3.3);
	\draw (\x*2,\x*3) -- (\x*2,\x*3.3);	
	\draw (\x*1.5,\x*3.1) node[circle,fill,inner sep=\x*0.2pt]{};
	\draw (\x*1.5,\x*3.165) node[circle,fill,inner sep=\x*0.2pt]{};
	\draw (\x*1.5,\x*3.23) node[circle,fill,inner sep=\x*0.2pt]{};
	
	\draw (\x*3,\x*1) -- (\x*3,\x*0.7);
	\draw (\x*4,\x*1) -- (\x*4,\x*0.7);	
	\draw (\x*3.5,\x*0.9) node[circle,fill,inner sep=\x*0.2pt]{};
	\draw (\x*3.5,\x*0.835) node[circle,fill,inner sep=\x*0.2pt]{};
	\draw (\x*3.5,\x*0.77) node[circle,fill,inner sep=\x*0.2pt]{};

	\draw[dashed] (\x*1,\x*2.1) -- (\x*3,\x*2.1);
	\draw[dashed] (\x*1,\x*2.3) -- (\x*3,\x*2.3);
	\draw[dashed] (\x*1,\x*2.75) -- (\x*3,\x*2.75);
	\draw[dashed] (\x*2,\x*1.2) -- (\x*4,\x*1.2);
	\draw[dashed] (\x*2,\x*1.8) -- (\x*4,\x*1.8);
	\draw[dashed] (\x*2,\x*1.4) -- (\x*4,\x*1.4);
	
	\draw[dashed] (\x*2.1,\x*3) -- (\x*2.1,\x*1);
	\draw[dashed] (\x*2.3,\x*3) -- (\x*2.3,\x*1);
	\draw[dashed] (\x*2.45,\x*3) -- (\x*2.45,\x*1);
	\draw[dashed] (\x*2.6,\x*3) -- (\x*2.6,\x*1);
	\draw[dashed] (\x*2.9,\x*3) -- (\x*2.9,\x*1);
	\draw[dashed] (\x*2.77,\x*3) -- (\x*2.77,\x*1);
		
	\draw (\x*2.1,\x*2.75) node[cross]{};
	\draw (\x*2.3,\x*2.75) node[cross]{};
	\draw (\x*2.45,\x*2.75) node[cross]{};
	
	\draw (\x*2.3,\x*2.3) node[cross]{};
	\draw (\x*2.45,\x*2.3) node[cross]{};
	\draw (\x*2.6,\x*2.3) node[cross]{};
	
	\draw (\x*2.45,\x*2.1) node[cross]{};
	\draw (\x*2.6,\x*2.1) node[cross]{};
	\draw (\x*2.77,\x*2.1) node[cross]{};

	\draw (\x*2.6,\x*1.8) node[cross]{};
	\draw (\x*2.77,\x*1.8) node[cross]{};
	\draw (\x*2.9,\x*1.8) node[cross]{};
	
	\draw (\x*2.77,\x*1.4) node[cross]{};
	\draw (\x*2.9,\x*1.4) node[cross]{};
	\draw (\x*2.1,\x*1.4) node[cross]{};
	
	\draw (\x*2.9,\x*1.2) node[cross]{};
	\draw (\x*2.1,\x*1.2) node[cross]{};
	\draw (\x*2.3,\x*1.2) node[cross]{};

	\node[draw=none] at (\x*2,\x*0.6) {};
\end{tikzpicture}
  \end{minipage}
  \hspace{0.05\textwidth} $\stackrel{\text{bit-flip}}{\Rightarrow}$ \hspace{0.03\textwidth}
  \begin{minipage}{0.35\textwidth}
    \def\x{1.5} 

\tikzset{cross/.style={cross out, draw=red, fill=none, minimum size=2*(#1-\pgflinewidth), inner sep=0pt, outer sep=0pt}, cross/.default={\x*2pt}}

\begin{tikzpicture}

	\node[draw=none] at (\x*1.5,\x*2.5)  {\color{light-gray}$\bl{i-1}^T$};
	\node[draw=none] at (\x*2.5,\x*2.5)  {\color{light-gray}$\bl{i}$};
	\node[draw=none] at (\x*2.5,\x*1.5)  {\color{light-gray}$\bl{i+1}^T$};
	\node[draw=none] at (\x*3.5,\x*1.5)  {\color{light-gray}$\bl{i+2}$};

	\draw (\x*1,\x*3) rectangle (\x*2,\x*2);
	\draw (\x*2,\x*3) rectangle (\x*3,\x*2);
	\draw (\x*2,\x*2) rectangle (\x*3,\x*1);
	\draw (\x*3,\x*2) rectangle (\x*4,\x*1);
	
	\draw (\x*1,\x*3) -- (\x*1,\x*3.3);
	\draw (\x*2,\x*3) -- (\x*2,\x*3.3);	
	\draw (\x*1.5,\x*3.1) node[circle,fill,inner sep=\x*0.2pt]{};
	\draw (\x*1.5,\x*3.165) node[circle,fill,inner sep=\x*0.2pt]{};
	\draw (\x*1.5,\x*3.23) node[circle,fill,inner sep=\x*0.2pt]{};
	
	\draw (\x*3,\x*1) -- (\x*3,\x*0.7);
	\draw (\x*4,\x*1) -- (\x*4,\x*0.7);	
	\draw (\x*3.5,\x*0.9) node[circle,fill,inner sep=\x*0.2pt]{};
	\draw (\x*3.5,\x*0.835) node[circle,fill,inner sep=\x*0.2pt]{};
	\draw (\x*3.5,\x*0.77) node[circle,fill,inner sep=\x*0.2pt]{};

	\draw[dashed] (\x*1,\x*2.1) -- (\x*3,\x*2.1);
	\draw[dashed] (\x*1,\x*2.3) -- (\x*3,\x*2.3);
	\draw[dashed] (\x*1,\x*2.75) -- (\x*3,\x*2.75);
	\draw[dashed] (\x*2,\x*1.2) -- (\x*4,\x*1.2);
	\draw[dashed] (\x*2,\x*1.8) -- (\x*4,\x*1.8);
	\draw[dashed] (\x*2,\x*1.4) -- (\x*4,\x*1.4);
	
	\draw[dashed] (\x*2.1,\x*3) -- (\x*2.1,\x*1);
	\draw[dashed] (\x*2.3,\x*3) -- (\x*2.3,\x*1);
	\draw[dashed] (\x*2.45,\x*3) -- (\x*2.45,\x*1);
	\draw[dashed] (\x*2.6,\x*3) -- (\x*2.6,\x*1);
	\draw[dashed] (\x*2.9,\x*3) -- (\x*2.9,\x*1);
	\draw[dashed] (\x*2.77,\x*3) -- (\x*2.77,\x*1);
	
	\draw (\x*2.6,\x*2.75) node[cross]{};
	\draw (\x*2.77,\x*2.75) node[cross]{};
	\draw (\x*2.9,\x*2.75) node[cross]{};
	
	\draw (\x*2.77,\x*2.3) node[cross]{};
	\draw (\x*2.9,\x*2.3) node[cross]{};
	\draw (\x*2.1,\x*2.3) node[cross]{};
	
	\draw (\x*2.9,\x*2.1) node[cross]{};
	\draw (\x*2.1,\x*2.1) node[cross]{};
	\draw (\x*2.3,\x*2.1) node[cross]{};
	
	\draw (\x*2.1,\x*1.8) node[cross]{};
	\draw (\x*2.3,\x*1.8) node[cross]{};
	\draw (\x*2.45,\x*1.8) node[cross]{};
	
	\draw (\x*2.3,\x*1.4) node[cross]{};
	\draw (\x*2.45,\x*1.4) node[cross]{};
	\draw (\x*2.6,\x*1.4) node[cross]{};
	
	\draw (\x*2.45,\x*1.2) node[cross]{};
	\draw (\x*2.6,\x*1.2) node[cross]{};
	\draw (\x*2.77,\x*1.2) node[cross]{};

	\node[draw=none] at (\x*2,\x*0.6) {};
\end{tikzpicture}
  \end{minipage}
  \caption{Non-minimal $(6,6)$ stall pattern of a staircase code with $t=2$ error-correcting component code. Applying
    the bit-flip operation results in another stall pattern of same size.}

  \label{fig:nonmininvert}
\end{figure}

\subsection{Bit-Flip with Undetected Error Events} \label{subsec:resundet}

Assume that undetected error events occur, \emph{i.e.}, there is an incorrect component word with all-zero syndrome after the
sliding window decoding. This word is a codeword of the component code, but since every positions is protected by two
component codes, the errors can generally still be detected by the other component code decoder. However, if not only
one but multiple undetected error events occur such that the resulting errors are in the same positions, it is possible
that not all positions involved in a stall pattern can be located and~\eqref{eq:remerrs} does not necessarily hold. It
is therefore difficult to give a theoretical analysis of stall patterns with undetected error events, but simulations
(Section~\ref{subsec:simres}) show that these cases are unlikely and that many such patterns can still be
resolved. \fig{fig:43stpa} shows an example of an uncorrectable $(4,3)$ stall pattern with three identical error vectors
in the columns and additional undetected erroneous rows that cannot be resolved at all. While the decoder detects that
the block is not valid, the columns cannot be located because their syndromes are zero and the operation \emph{bit-flip}
fails. In this case, the errors of the stall pattern are detectable but not correctable.

\begin{figure}
  \centering

  \begin{minipage}{0.35\textwidth}
    \def\x{1.5} 

\tikzset{cross/.style={cross out, draw=black, fill=none, minimum size=2*(#1-\pgflinewidth), inner sep=0pt, outer sep=0pt}, cross/.default={\x*2pt}}

\begin{tikzpicture}

	\node[draw=none] at (\x*1.5,\x*2.5)  {\color{light-gray}$\bl{i-1}^T$};
	\node[draw=none] at (\x*2.5,\x*2.5)  {\color{light-gray}$\bl{i}$};
	\node[draw=none] at (\x*2.5,\x*1.5)  {\color{light-gray}$\bl{i+1}^T$};
	\node[draw=none] at (\x*3.5,\x*1.5)  {\color{light-gray}$\bl{i+2}$};

	\draw (\x*1,\x*3) rectangle (\x*2,\x*2);
	\draw (\x*2,\x*3) rectangle (\x*3,\x*2);
	\draw (\x*2,\x*2) rectangle (\x*3,\x*1);
	\draw (\x*3,\x*2) rectangle (\x*4,\x*1);
	
	\draw (\x*1,\x*3) -- (\x*1,\x*3.3);
	\draw (\x*2,\x*3) -- (\x*2,\x*3.3);	
	\draw (\x*1.5,\x*3.1) node[circle,fill,inner sep=\x*0.2pt]{};
	\draw (\x*1.5,\x*3.165) node[circle,fill,inner sep=\x*0.2pt]{};
	\draw (\x*1.5,\x*3.23) node[circle,fill,inner sep=\x*0.2pt]{};

	\draw (\x*3,\x*1) -- (\x*3,\x*0.7);
	\draw (\x*4,\x*1) -- (\x*4,\x*0.7);	
	\draw (\x*3.5,\x*0.9) node[circle,fill,inner sep=\x*0.2pt]{};
	\draw (\x*3.5,\x*0.835) node[circle,fill,inner sep=\x*0.2pt]{};
	\draw (\x*3.5,\x*0.77) node[circle,fill,inner sep=\x*0.2pt]{};

	\draw[dashed] (\x*1,\x*2.3) -- (\x*3,\x*2.3);
	\draw[dashed] (\x*1,\x*2.75) -- (\x*3,\x*2.75);
	\draw[dashed] (\x*2,\x*1.2) -- (\x*4,\x*1.2);
	\draw[dashed] (\x*2,\x*1.8) -- (\x*4,\x*1.8);
	
	\draw[dashed] (\x*2.2,\x*3) -- (\x*2.2,\x*1);
	\draw[dashed] (\x*2.5,\x*3) -- (\x*2.5,\x*1);
	\draw[dashed] (\x*2.8,\x*3) -- (\x*2.8,\x*1);
	
	\draw (\x*2.2,\x*1.2) node[cross]{};
	\draw (\x*2.2,\x*1.8) node[cross]{};
	\draw (\x*2.2,\x*2.3) node[cross]{};
	\draw (\x*2.2,\x*2.75) node[cross]{};
	\draw (\x*2.5,\x*1.2) node[cross]{};
	\draw (\x*2.5,\x*1.8) node[cross]{};
	\draw (\x*2.5,\x*2.3) node[cross]{};
	\draw (\x*2.5,\x*2.75) node[cross]{};
	\draw (\x*2.8,\x*1.2) node[cross]{};
	\draw (\x*2.8,\x*1.8) node[cross]{};
	\draw (\x*2.8,\x*2.3) node[cross]{};
	\draw (\x*2.8,\x*2.75) node[cross]{};

\end{tikzpicture}
  \end{minipage}
  \hspace{0.05\textwidth} $\stackrel{\text{decode}}{\Rightarrow}$ \hspace{0.03\textwidth}
  \begin{minipage}{0.35\textwidth}
    \def\x{1.5} 

\tikzset{cross/.style={cross out, fill=none, minimum size=2*(#1-\pgflinewidth), inner sep=0pt, outer sep=0pt}, cross/.default={\x*2pt}}

\begin{tikzpicture}

	\node[draw=none] at (\x*1.5,\x*2.5)  {\color{light-gray}$\bl{i-1}^T$};
	\node[draw=none] at (\x*2.5,\x*2.5)  {\color{light-gray}$\bl{i}$};
	\node[draw=none] at (\x*2.5,\x*1.5)  {\color{light-gray}$\bl{i+1}^T$};
	\node[draw=none] at (\x*3.5,\x*1.5)  {\color{light-gray}$\bl{i+2}$};

	\draw (\x*1,\x*3) rectangle (\x*2,\x*2);
	\draw (\x*2,\x*3) rectangle (\x*3,\x*2);
	\draw (\x*2,\x*2) rectangle (\x*3,\x*1);
	\draw (\x*3,\x*2) rectangle (\x*4,\x*1);
	
	\draw (\x*1,\x*3) -- (\x*1,\x*3.3);
	\draw (\x*2,\x*3) -- (\x*2,\x*3.3);	
	\draw (\x*1.5,\x*3.1) node[circle,fill,inner sep=\x*0.2pt]{};
	\draw (\x*1.5,\x*3.165) node[circle,fill,inner sep=\x*0.2pt]{};
	\draw (\x*1.5,\x*3.23) node[circle,fill,inner sep=\x*0.2pt]{};

	\draw (\x*3,\x*1) -- (\x*3,\x*0.7);
	\draw (\x*4,\x*1) -- (\x*4,\x*0.7);	
	\draw (\x*3.5,\x*0.9) node[circle,fill,inner sep=\x*0.2pt]{};
	\draw (\x*3.5,\x*0.835) node[circle,fill,inner sep=\x*0.2pt]{};
	\draw (\x*3.5,\x*0.77) node[circle,fill,inner sep=\x*0.2pt]{};

	\draw[dashed] (\x*1,\x*2.3) -- (\x*3,\x*2.3);
	\draw[dashed] (\x*1,\x*2.75) -- (\x*3,\x*2.75);
	\draw[dashed] (\x*2,\x*1.2) -- (\x*4,\x*1.2);
	\draw[dashed] (\x*2,\x*1.8) -- (\x*4,\x*1.8);
	
	\draw[dashed,blue] (\x*1,\x*2.5) -- (\x*3,\x*2.5);
	\draw[dashed,blue] (\x*2,\x*1.5) -- (\x*4,\x*1.5);
		
	\draw (\x*2.2,\x*1.2) node[cross,draw=black]{};
	\draw (\x*2.2,\x*1.8) node[cross,draw=black]{};
	\draw (\x*2.2,\x*2.3) node[cross,draw=black]{};
	\draw (\x*2.2,\x*2.75) node[cross,draw=black]{};
	\draw (\x*2.5,\x*1.2) node[cross,draw=black]{};
	\draw (\x*2.5,\x*1.8) node[cross,draw=black]{};
	\draw (\x*2.5,\x*2.3) node[cross,draw=black]{};
	\draw (\x*2.5,\x*2.75) node[cross,draw=black]{};
	\draw (\x*2.8,\x*1.2) node[cross,draw=black]{};
	\draw (\x*2.8,\x*1.8) node[cross,draw=black]{};
	\draw (\x*2.8,\x*2.3) node[cross,draw=black]{};
	\draw (\x*2.8,\x*2.75) node[cross,draw=black]{};
	
	\draw (\x*2.2,\x*2.5) node[cross,draw=blue]{};
	\draw (\x*2.5,\x*2.5) node[cross,draw=blue]{};
	\draw (\x*2.8,\x*2.5) node[cross,draw=blue]{};
	\draw (\x*2.2,\x*1.5) node[cross,draw=blue]{};
	\draw (\x*2.5,\x*1.5) node[cross,draw=blue]{};
	\draw (\x*2.8,\x*1.5) node[cross,draw=blue]{};

\end{tikzpicture}
  \end{minipage}
  \caption{Non-minimal $(4,3)$ stall pattern of a staircase code with $t=2$ error-correcting component codes of distance
    $d_{\min}=6$. The stall pattern cannot be resolved because the errors inserted by undetected error events are not
    correctable and the columns cannot be located.}
  \label{fig:43stpa}
\end{figure}

The difficulties of decoding the previously discussed unsolvable stall pattern stem from not being able to locate the
involved rows or columns at all. However, if the involved rows and columns both cause the error vectors to be in the
proximity of a valid codeword, undetected error events in the component codewords can cause the error matrix to be close
to a valid staircase code block. As the component codes are linear, so is the staircase code itself and an error matrix
that is a valid block is therefore not detectable by any decoding algorithm. \fig{fig:4414stpa} gives an illustration of
such an unsolvable stall pattern. The error matrix differs, depending on which decoder (row or column) ran last. It is
possible that such a pattern is resolved by inverting only a single column and then performing regular decoding
iterations, as described in \sect{subsec:resnoundet}. However, it now depends on the chosen column. If more errors are
inserted than resolved, the weight of the stall pattern will increase due to undetected error events in all rows and
columns, resulting in an undetected error pattern in the staircase code.

\begin{figure}
  \centering
  \begin{minipage}{0.35\textwidth}
    \def\x{1.5} 

\tikzset{cross/.style={cross out, fill=none, minimum size=2*(#1-\pgflinewidth), inner sep=0pt, outer sep=0pt}, cross/.default={\x*2pt}}

\begin{tikzpicture}

	\node[draw=none] at (\x*1.5,\x*2.5)  {\color{light-gray}$\bl{i-1}^T$};
	\node[draw=none] at (\x*2.5,\x*2.5)  {\color{light-gray}$\bl{i}$};
	\node[draw=none] at (\x*2.5,\x*1.5)  {\color{light-gray}$\bl{i+1}^T$};
	\node[draw=none] at (\x*3.5,\x*1.5)  {\color{light-gray}$\bl{i+2}$};

	\draw (\x*1,\x*3) rectangle (\x*2,\x*2);
	\draw (\x*2,\x*3) rectangle (\x*3,\x*2);
	\draw (\x*2,\x*2) rectangle (\x*3,\x*1);
	\draw (\x*3,\x*2) rectangle (\x*4,\x*1);
	
	\draw (\x*1,\x*3) -- (\x*1,\x*3.3);
	\draw (\x*2,\x*3) -- (\x*2,\x*3.3);	
	\draw (\x*1.5,\x*3.1) node[circle,fill,inner sep=\x*0.2pt]{};
	\draw (\x*1.5,\x*3.165) node[circle,fill,inner sep=\x*0.2pt]{};
	\draw (\x*1.5,\x*3.23) node[circle,fill,inner sep=\x*0.2pt]{};
	
	\draw (\x*3,\x*1) -- (\x*3,\x*0.7);
	\draw (\x*4,\x*1) -- (\x*4,\x*0.7);	
	\draw (\x*3.5,\x*0.9) node[circle,fill,inner sep=\x*0.2pt]{};
	\draw (\x*3.5,\x*0.835) node[circle,fill,inner sep=\x*0.2pt]{};
	\draw (\x*3.5,\x*0.77) node[circle,fill,inner sep=\x*0.2pt]{};

	\draw[dashed] (\x*1,\x*2.2) -- (\x*3,\x*2.2);
	\draw[dashed] (\x*2,\x*1.8) -- (\x*4,\x*1.8);

	\draw[dashed] (\x*2.7,\x*3) -- (\x*2.7,\x*1);
	\draw[dashed] (\x*2.25,\x*3) -- (\x*2.25,\x*1);
	\draw[dashed] (\x*2.4,\x*3) -- (\x*2.4,\x*1);
	\draw[dashed] (\x*2.55,\x*3) -- (\x*2.55,\x*1);
	\draw[dashed,blue] (\x*2.1,\x*3) -- (\x*2.1,\x*1);
	\draw[dashed,blue] (\x*2.9,\x*3) -- (\x*2.9,\x*1);
	
	\draw (\x*2.7,\x*1.3) node[cross,draw=black]{};
	\draw (\x*2.7,\x*1.8) node[cross,draw=black]{};
	\draw (\x*2.7,\x*2.75) node[cross,draw=black]{};
	\draw (\x*2.7,\x*2.2) node[cross,draw=black]{};
	\draw (\x*2.25,\x*1.3) node[cross,draw=black]{};
	\draw (\x*2.25,\x*1.8) node[cross,draw=black]{};
	\draw (\x*2.25,\x*2.75) node[cross,draw=black]{};
	\draw (\x*2.4,\x*1.8) node[cross,draw=black]{};
	\draw (\x*2.4,\x*1.3) node[cross,draw=black]{};
	\draw (\x*2.4,\x*2.2) node[cross,draw=black]{};
	\draw (\x*2.4,\x*2.75) node[cross,draw=black]{};
	\draw (\x*2.55,\x*1.3) node[cross,draw=black]{};
	\draw (\x*2.55,\x*2.75) node[cross,draw=black]{};
	\draw (\x*2.55,\x*2.2) node[cross,draw=black]{};
	\draw (\x*2.1,\x*1.3) node[cross,draw=blue]{};
	\draw (\x*2.1,\x*2.75) node[cross,draw=blue]{};
	\draw (\x*2.9,\x*1.3) node[cross,draw=blue]{};
	\draw (\x*2.9,\x*2.75) node[cross,draw=blue]{};
	
\end{tikzpicture}
  \end{minipage}
  \hspace{0.05\textwidth}$\mathop{\rightleftarrows}\limits^{\text{dec. columns}}_{\text{dec. rows}}$
  \hspace{0.003\textwidth}
  \begin{minipage}{0.35\textwidth}
    \def\x{1.5} 

\tikzset{cross/.style={cross out, fill=none, minimum size=2*(#1-\pgflinewidth), inner sep=0pt, outer sep=0pt}, cross/.default={\x*2pt}}

\begin{tikzpicture}

	\node[draw=none] at (\x*1.5,\x*2.5)  {\color{light-gray}$\bl{i-1}^T$};
	\node[draw=none] at (\x*2.5,\x*2.5)  {\color{light-gray}$\bl{i}$};
	\node[draw=none] at (\x*2.5,\x*1.5)  {\color{light-gray}$\bl{i+1}^T$};
	\node[draw=none] at (\x*3.5,\x*1.5)  {\color{light-gray}$\bl{i+2}$};

	\draw (\x*1,\x*3) rectangle (\x*2,\x*2);
	\draw (\x*2,\x*3) rectangle (\x*3,\x*2);
	\draw (\x*2,\x*2) rectangle (\x*3,\x*1);
	\draw (\x*3,\x*2) rectangle (\x*4,\x*1);
	
	\draw (\x*1,\x*3) -- (\x*1,\x*3.3);
	\draw (\x*2,\x*3) -- (\x*2,\x*3.3);	
	\draw (\x*1.5,\x*3.1) node[circle,fill,inner sep=\x*0.2pt]{};
	\draw (\x*1.5,\x*3.165) node[circle,fill,inner sep=\x*0.2pt]{};
	\draw (\x*1.5,\x*3.23) node[circle,fill,inner sep=\x*0.2pt]{};
	
	\draw (\x*3,\x*1) -- (\x*3,\x*0.7);
	\draw (\x*4,\x*1) -- (\x*4,\x*0.7);	
	\draw (\x*3.5,\x*0.9) node[circle,fill,inner sep=\x*0.2pt]{};
	\draw (\x*3.5,\x*0.835) node[circle,fill,inner sep=\x*0.2pt]{};
	\draw (\x*3.5,\x*0.77) node[circle,fill,inner sep=\x*0.2pt]{};

	\draw[dashed] (\x*1,\x*2.2) -- (\x*3,\x*2.2);
	\draw[dashed] (\x*1,\x*2.75) -- (\x*3,\x*2.75);
	\draw[dashed] (\x*2,\x*1.3) -- (\x*4,\x*1.3);
	\draw[dashed] (\x*2,\x*1.8) -- (\x*4,\x*1.8);
	\draw[dashed,blue] (\x*2,\x*1.5) -- (\x*4,\x*1.5);
	\draw[dashed,blue] (\x*1,\x*2.4) -- (\x*3,\x*2.4);
	
	\draw[dashed] (\x*2.25,\x*3) -- (\x*2.25,\x*1);
	\draw[dashed] (\x*2.55,\x*3) -- (\x*2.55,\x*1);
	
	\draw (\x*2.7,\x*1.3) node[cross,draw=black]{};
	\draw (\x*2.7,\x*1.8) node[cross,draw=black]{};
	\draw (\x*2.7,\x*2.75) node[cross,draw=black]{};
	\draw (\x*2.7,\x*2.2) node[cross,draw=black]{};
	\draw (\x*2.25,\x*1.3) node[cross,draw=black]{};
	\draw (\x*2.25,\x*1.8) node[cross,draw=black]{};
	\draw (\x*2.25,\x*2.75) node[cross,draw=black]{};
	\draw (\x*2.4,\x*1.8) node[cross,draw=black]{};
	\draw (\x*2.4,\x*1.3) node[cross,draw=black]{};
	\draw (\x*2.4,\x*2.2) node[cross,draw=black]{};
	\draw (\x*2.4,\x*2.75) node[cross,draw=black]{};
	\draw (\x*2.55,\x*1.3) node[cross,draw=black]{};
	\draw (\x*2.55,\x*2.75) node[cross,draw=black]{};
	\draw (\x*2.55,\x*2.2) node[cross,draw=black]{};

	\draw (\x*2.7,\x*1.5) node[cross,draw=blue]{};
	\draw (\x*2.7,\x*2.4) node[cross,draw=blue]{};
	\draw (\x*2.4,\x*1.5) node[cross,draw=blue]{};
	\draw (\x*2.4,\x*2.4) node[cross,draw=blue]{};
	
\end{tikzpicture}
  \end{minipage}
  \caption{Non-minimal $(4,4)$ stall pattern of a staircase code with $t=2$ error-correcting component codes of distance
    $d_{\min}=6$. Undetected error events occur in the row decoder (left) and the column decoder (right).}
  \label{fig:4414stpa}
\end{figure}

\subsection{The Bit-Flip Algorithm in Implementation} \label{subsec:bitflipimplement}

\sect{subsec:resnoundet} and \sect{subsec:resundet} introduce multiple types of stall patterns, some of which can be
guaranteed to be resolved, while for others it depends on the positions where the errors occurred. \defref{def:syndvec}
offers a mathematical description of the bit-flip operation, which is useful for the analysis. However, for
implementations it makes sense to perform some additional steps, which help prevent undesirable effects of undetected
error events when decoding non-minimal stall patterns. Clearly, if $\epsilon \geq d_{\code{}} = d_{\min}^2$ it is
possible that a stall pattern is undetectable. However, this case is unlikely and it is theoretically impossible to
protect against it. The more likely problem arises when $K \geq d_{\min}$ and $L \geq d_{\min}$, but
$\epsilon < d_{\code{}}$. Then the $\bar{\epsilon}$ previously error free positions can form a stall pattern of size
$(K,L)$ causing the bit-flip operation to result in another stall pattern (compare \fig{fig:nonmininvert}). Further, it
is possible that more errors are inserted in the iterations of the sliding window decoder following the bit-flip
operation. To avoid these undesired effects, the bit-flip operation should be adapted if $K \geq d_{\min}$ and
$L \geq d_{\min}$. In this case, it is advantageous to only flip the bits in a single row or column and then perform
some iterations of the sliding window decoder in which only errors within the positions involved in the stall pattern
are decoded, before performing the regular iterations again. Then, with high probability, some rows or columns involved
in the stall pattern can be decoded, leading to other decodable columns or rows and eventually to resolving of the stall
pattern, while the risk of inserting more errors outside the positions of the stall pattern is reduced
significantly. For example, consider the stall pattern depicted in \fig{fig:nonmininvert}. Flipping all positions
involved in the stall pattern results in another stall pattern of same size. If only the positions of the first row are
flipped, the three rightmost columns can be decoded in normal sliding window iterations. However, the three leftmost
columns now contain four errors each, making it possible for undetected error events to occur within these columns. When
performing two iterations in which only errors within the positions involved in the stall pattern are corrected, the
possible undetected error events are avoided while the pattern can still be resolved.
\begin{myremark}
  When applying the bit-flip operation, the number of errors inserted is inversely related to the number of errors
  resolved in the respective row or column, \emph{i.e.}, the more errors the row or column had in the positions involved in the
  stall pattern, the fewer it has after the bit-flip operation. It follows that errors that could be decoded by the
  component code, but are still non-zero when the operation is invoked, cause the most errors to be inserted
  additionally. As observed in \cite{Justesen2011} it can advantageous to perform iterations correcting only a single
  error, as it is more likely for an undetected error event to occur when correcting $t$ errors. This resolves
  most errors inserted by undetected error events, while avoiding the insertion of additional errors in most cases and
  can improve the likelihood of resolving a stall pattern (see \stepref{step:repeat} in Algorithm~\ref{algo:slidingresolve}).
\end{myremark}
\begin{algorithm}\label{algo:slidingresolve}
  \caption{Sliding window decoder with stall pattern resolving}
  \KwIn{$N$ Blocks $\bl{0},...,\bl{N-1}$ of a staircase code with $t$ error-correcting $[n,k]$ component code}
  \KwOut{Decoded staircase code blocks}

  $i\leftarrow 0$;\\
  \While{$i + W \leq N$}{

    Perform regular decoding iterations in sliding window until maximum $v_{\max}$ is reached; \label{step:regular}\\

    Perform decoding iteration correcting only one error in every codeword;\label{step:repeat}\\

    \tcc{Get number of rows with non-zero syndromes}
    $\delta_0 \leftarrow$ Number of erroneous rows in $\left[\bl{i\phantom{+1} }^T \;\; \bl{i+1}\right]$; \\
    $\delta_1 \leftarrow$ Number of erroneous rows in $\left[\bl{i+1}^T \;\; \bl{i+2}\right]$;\\
    $\delta_2 \leftarrow$ Number of erroneous rows in $\left[\bl{i+2}^T \;\; \bl{i+3}\right]$;\\

    \If{$\delta_0 \neq 0$}{ \label{step:invoke} \tcc{invoke bit-flip operation} \If{$\delta_0 + \delta_2 < d_{\min}$
        \KWor $\delta_1 < d_{\min}$}{ \label{step:checksmall} \tcc{Stall pattern can likely resolved by flipping all
          positions}
        Flip all positions within $\bl{i+1}$ and $\bl{i+2}$ at intersections of erroneous rows and columns;\\
      } \Else{ \tcc{For large stall patterns flip only one row}
        Flip positions within $\bl{i+1}$ and $\bl{i+2}$ of one involved row;\\
      } \tcc{Attempt resolving the remaining errors}
      Perform decoding iterations only correcting errors within the positions of the stall pattern; \label{step:internal}\\
      Perform decoding iterations only correcting errors in the blocks $\bl{i+1}$ and $\bl{i+2}$; \label{step:normal}\\
      Repeat once from \stepref{step:repeat}; \label{step:jump}\\
    }

    \tcc{Shift the window by one block}

    $i \leftarrow i+1$;\\
  }

\end{algorithm}
\algo{algo:slidingresolve} gives a pseudo-code description of the sliding window decoder with resolving of stall
patterns for a sliding window size $W$ and a maximum number of iterations $v_{\max}$ within one window. An advantage
of staircase codes is that they can be decoded by calculating the syndromes for every component codeword once and then
operating only on those syndromes. For ease of notation, the algorithm is given as operating directly on blocks,
however, it is straight forward implementable in the syndrome domain. The decoder consists of a sliding window decoder
(see \stepref{step:regular}) as proposed in \cite{stairfec} followed by the bit-flipping algorithm proposed in this
work. After the decoding iterations within one window, an additional iteration correcting only a single error is
performed (see \stepref{step:repeat}) to resolve some of the errors inserted by undetected error events, while avoiding
the more likely undetected error events of $t$ inserted errors. In \stepref{step:invoke} it is determined whether the
bit-flip operation should be invoked and in \stepref{step:checksmall} it is determined whether all positions at
intersections of erroneous rows and columns should be flipped (see \theoref{theo:guarantee-solving}) or if it is
advantageous to only flip a single row or column. The bit-flip operation is followed by decoding iterations in which
only positions involved in the stall pattern can be changed (see \stepref{step:internal}), followed by decoding
iterations solving errors in the blocks that contain positions of the stall pattern (see \stepref{step:normal}). As
resolving a large stall pattern can result in other stall patterns, the bit-flip operation should be performed twice
(see \stepref{step:jump}).

In general, the window size of a sliding window decoder should be chosen such that the last block in the sliding window
contains no more decodable errors after the iterations in the window have been performed. Assuming the window size of
the decoder introduced in \cite{stairfec} is chosen in that way, only that block could be used to locate stall
patterns. Since the resolving of stall patterns described in \algo{algo:slidingresolve} requires $4$ blocks to be free
of all errors except for the ones involved in a stall pattern, the size of the sliding window should be increased by $3$
blocks compared to the window size of \cite{stairfec}.

\section{Improved Error-Floor Analysis} \label{sec:impranalysis}

\subsection{An Improved Analysis}
For the parameter ranges of interest, the error floor of the standard sliding-window decoder (see
Section~\ref{subsec:stcdecoding}) is dominated by minimal stall patterns \cite{stairfec}.  As shown in
\sect{subsec:loerrflo}, the improved decoder is able to solve all minimal stall patterns and it follows that an exact
analysis of the error floor contribution of the remaining unsolvable stall patterns is crucial for evaluating its
performance.

As the success of the resolving strategy described in \sect{subsec:loerrflo} depends on $K, L$ and also $\epsilon$, we
analyze each summand from~\eqref{eq:bounderrflo} separately by
\begin{equation} \label{eq:PCold} P_{C,old}(K,L,\epsilon) \stackrel{\triangle}{=} \frac{\epsilon}{m^2} \cdot A_{K,L}
  \cdot \hat{N}_{K,L}^{\epsilon} \cdot (p + \xi)^\epsilon .
\end{equation}
Since minimal stall patterns can be resolved by bit-flipping, their contribution to the error floor is no longer
dominating and using $\hat{N}_{K,L}^\epsilon$ from \eqref{eq:amerrdist} is very inaccurate for the improved decoder, due to
the overestimation of $N_{K,L}^{\epsilon}$ for non-minimal stall patterns~(see \fig{fig:estcomp66}).

The problem of finding the number of stall patterns of weight $\epsilon$ within $K$ rows and $L$ columns is equivalent
to the problem of finding the number of binary $K\times L$ matrices of weight $\epsilon$ and a given weight in each row
and column. A solution to this combinatorial problem is given in \cite{mat01} (see also \cite{Perez2002}). The function
denoted by $\mathcal{A}(\mathbf{r},\mathbf{s})$ takes a vector $\mathbf{r} = \left[r_1,r_2,...,r_{K} \right]$ containing
the row weights and $\mathbf{s} = \left[s_1,s_2,...,s_L\right] $ containing the column weights and returns the number of
distinct binary matrices that meet the weight restrictions on the rows and columns. By definition, for stall patterns it
holds that $r_i \geq t+1 \; \forall \; i \in [1,K]$ and $s_j \geq t+1 \; \forall \; j \in [1,L]$, where we denote by
$[a,b]$ the set of integers $i$ such that $a\leq i \leq b$. Since every error has to be in one of the rows and one of
the columns, $\sum_{i=1}^{K} r_i = \epsilon$ and $\sum_{j=1}^{L} s_j = \epsilon$ hold.
\begin{lemma}\label{lem:numberstpa}
  The number of stall patterns of weight $\epsilon$ involving $K$ rows and $L$ columns is
  \begin{equation*}
    N_{K,L}^{\epsilon} = \left| \mathcal{Z}_{K,L}^\epsilon \right|
  \end{equation*}
  where $\mathcal{Z}_{K,L}^\epsilon$ is given in Lemma~\ref{lem:cardZex} (Appendix).
\end{lemma}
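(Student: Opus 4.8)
The plan is to establish a weight-preserving bijection between the stall patterns counted by $N_{K,L}^\epsilon$ and a set of constrained binary matrices, and then to obtain the count by partitioning this set according to its row- and column-weight profiles, so that each block of the partition is enumerated by the function $\mathcal{A}(\mathbf{r},\mathbf{s})$ of~\cite{mat01}.

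First I would fix an arbitrary choice of the $K$ involved rows and $L$ involved columns. By the definition of a stall pattern every error lies at an intersection of an involved row and an involved column, so the support of the pattern is confined to the $K\times L$ grid of these intersections; recording which of these $KL$ positions are in error yields a binary $K\times L$ matrix $M$. The defining property that each involved row and each involved column carries at least $t+1$ errors translates exactly into the constraints $r_i = \wt(\text{row } i)\geq t+1$ for all $i\in[1,K]$ and $s_j = \wt(\text{col } j)\geq t+1$ for all $j\in[1,L]$, while the fixed weight $\epsilon$ becomes $\sum_{i,j} M_{ij}=\epsilon$. Conversely, any such matrix describes a legitimate stall pattern that involves \emph{exactly} these $K$ rows and $L$ columns, since $r_i,s_j\geq t+1>0$ forces every row and every column of the grid to be genuinely involved. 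This gives the desired bijection, so $N_{K,L}^\epsilon$ equals the number of binary $K\times L$ matrices whose row and column weights are all at least $t+1$ and whose total weight is $\epsilon$.

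Next I would partition this matrix set by the pair of weight vectors $(\mathbf{r},\mathbf{s})$ with $\mathbf{r}=[r_1,\dots,r_K]$ and $\mathbf{s}=[s_1,\dots,s_L]$. Matrices with distinct profiles are distinct, so the partition is genuine, and counting the $1$'s once by rows and once by columns forces the compatibility conditions $\sum_{i=1}^{K} r_i = \sum_{j=1}^{L} s_j = \epsilon$ on any nonempty block. For a fixed admissible profile the number of matrices in the block is, by definition, $\mathcal{A}(\mathbf{r},\mathbf{s})$, whence $N_{K,L}^\epsilon = \sum_{\mathbf{r},\mathbf{s}} \mathcal{A}(\mathbf{r},\mathbf{s})$, the sum ranging over all $\mathbf{r}\in[t+1,L]^K$ and $\mathbf{s}\in[t+1,K]^L$ with $\sum r_i = \sum s_j = \epsilon$. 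Identifying this collection of admissible profiles together with the matrices they enumerate as the set $\mathcal{Z}_{K,L}^\epsilon$ then makes $N_{K,L}^\epsilon$ coincide with the quantity $|\mathcal{Z}_{K,L}^\epsilon|$ evaluated in Lemma~\ref{lem:cardZex}.

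I expect the substantive combinatorial obstacle — evaluating $\mathcal{A}(\mathbf{r},\mathbf{s})$ in closed form and assembling the profile sum into the explicit cardinality $|\mathcal{Z}_{K,L}^\epsilon|$ — to be precisely what is deferred to Lemma~\ref{lem:cardZex} and the matrix-enumeration result of~\cite{mat01}. Within the present lemma the only delicate point is the bijection bookkeeping: one must verify that confining the support to the $K\times L$ grid while imposing positive weight in every row and column captures exactly the patterns involving exactly $K$ rows and $L$ columns (neither fewer, because all weights are positive, nor more, because the support cannot leave the grid), so that no pattern is omitted or double-counted across different profiles.
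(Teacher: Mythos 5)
Your proposal is correct and follows essentially the same route as the paper: the paper's proof is exactly your first paragraph, namely restricting the stall pattern to the intersections of the $K$ involved rows and $L$ involved columns to identify it bijectively with a binary $K\times L$ matrix of weight $\epsilon$ whose row and column weights are all at least $t+1$, i.e.\ an element of $\mathcal{Z}_{K,L}^{\epsilon}$. Your second paragraph (partitioning by weight profiles $(\mathbf{r},\mathbf{s})$ and summing $\mathcal{A}(\mathbf{r},\mathbf{s})$) is not part of this lemma's proof in the paper but is precisely the mechanism of the deferred Lemma~\ref{lem:cardZex}, so it is consistent, just relocated.
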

\begin{proof}
  Restricting the block to the intersection of the involved rows and columns, each stall pattern of size $(K,L)$ and
  weight $\epsilon$ can be represented by a binary $K\times L$ matrix of weight~$\epsilon$. By definition
  $\mathcal{Z}_{K,L}^{\epsilon}$ is the set of all such matrices and it follows that the number of stall patterns is
  given by its cardinality.
\end{proof}
With \lem{lem:numberstpa} and \lem{lem:cardZex}, the contribution of stall patterns of a given size to the error floor
can be stated without overestimating the number of unique stall patterns.
\begin{theorem}\label{theo:contribution}
  Consider a staircase code of block size $m \times m$. The contribution to the error floor by stall patterns of size
  $(K,L)$ and weight $\epsilon$ is given by
  \begin{equation} \label{eq:PCnew} P_{C,new}(K,L,\epsilon) = \frac{\epsilon}{m^2} \cdot A_{K,L} \cdot N_{K,L}^\epsilon
    \cdot (p + \xi)^\epsilon.
  \end{equation}
\end{theorem}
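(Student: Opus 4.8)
The plan is to mirror the derivation of the bound in \eqref{eq:bounderrflo} and its per-term version $P_{C,old}$ in \eqref{eq:PCold}, replacing the overcounting upper bound $\hat{N}_{K,L}^\epsilon$ of \eqref{eq:amerrdist} by the exact count $N_{K,L}^\epsilon$ furnished by \lem{lem:numberstpa}. Concretely, I would first fix a single unsolvable $(K,L)$ stall pattern of weight $\epsilon$ and argue that, once it occurs and survives decoding, it contributes exactly $\epsilon$ erroneous bits among the $m^2$ bits associated with its block, hence a term $\frac{\epsilon}{m^2}$ to $\berout$. Weighting this by the probability $(p+\xi)^\epsilon$ that the $\epsilon$ involved positions are all in error, with $\xi$ the same correction factor for undetected error events as in \eqref{eq:bounderrflo}, gives the contribution $\frac{\epsilon}{m^2}(p+\xi)^\epsilon$ of one fixed pattern.

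Next I would count the distinct stall patterns of type $(K,L,\epsilon)$ associated with a given block. Following \eqref{eq:choicesroco}, the factor $A_{K,L}$ enumerates the admissible choices of the $K$ involved rows and $L$ involved columns, already respecting the association rule that a pattern belongs to the block of lowest index containing one of its errors. For each such fixed choice of rows and columns, the number of distinct placements of the $\epsilon$ errors within the $K\times L$ intersection grid such that every involved row and column carries at least $t+1$ errors is, by \lem{lem:numberstpa}, exactly $N_{K,L}^\epsilon$. Multiplying the per-pattern contribution by the pattern count $A_{K,L}\cdot N_{K,L}^\epsilon$ then yields \eqref{eq:PCnew}.

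The step that needs care, and the only place where the new statement departs from the bound \eqref{eq:bounderrflo}, is verifying that the product $A_{K,L}\cdot N_{K,L}^\epsilon$ counts each stall pattern exactly once rather than merely bounding the count from above. The key observation is structural: in any stall pattern all errors lie at intersections of involved rows and columns, and by definition each involved row and column has weight at least $t+1$, so the set of involved rows and columns is uniquely recoverable from the pattern as precisely those rows and columns of nonzero weight. This makes the map (row/column choice, intersection matrix) $\mapsto$ stall pattern injective, so no pattern is double counted. This injectivity is exactly what the bound $\hat{N}_{K,L}^\epsilon$ in \eqref{eq:amerrdist} violates for non-minimal patterns, causing the overestimation noted after \eqref{eq:PCold}; once $N_{K,L}^\epsilon$ is the exact matrix count of \lem{lem:numberstpa}, the bijection argument closes the gap and \eqref{eq:PCnew} follows directly. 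I expect this bijection/no-double-counting verification to be the main obstacle, since the combinatorial heavy lifting has already been delegated to \lem{lem:numberstpa}.
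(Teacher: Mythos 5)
Your proposal is correct and follows essentially the same route as the paper, which states \theoref{theo:contribution} as an immediate consequence of \lem{lem:numberstpa}: the exact count $N_{K,L}^\epsilon$ is simply substituted for the overcounting bound $\hat{N}_{K,L}^{\epsilon}$ in the per-term expression \eqref{eq:PCold} inherited from \cite{stairfec}. Your added injectivity argument (that the involved rows and columns are recoverable from the pattern since every involved row and column has weight at least $t+1$, so no pattern is counted twice by $A_{K,L}\cdot N_{K,L}^\epsilon$) is a correct and welcome filling-in of reasoning the paper leaves implicit.
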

\begin{myremark}
  The analysis of \theoref{theo:contribution} is exact for $\xi =0$ if no undetected error events occur in the decoding
  process. The parameter $\xi$ is introduced in \cite{stairfec} to heuristically adjust for such undetected error
  events.
\end{myremark}
While this offers exact results, the computational complexity is substantial, growing rapidly with $K$ and $L$. In the
following, a method of approximating the exact number via simulations is presented. The approximation in
\cite{stairfec} is based on the fact that it is simple to find the number of matrices, such that only the conditions on
the rows or the columns of a stall pattern with $\epsilon = \epsilon_{\min}$ hold. However, only a fraction $\gamma$ of
these actually represent a stall pattern, \emph{i.e.}, fulfill the conditions on the columns. This fraction can be estimated by
choosing $K$ vectors of weight $r_i \geq t+1$ for the rows such that the weight is $\epsilon$ (e.g. by choosing $K$ random
vectors of weight $t+1$ and $\epsilon - K(t+1)$ additional positions randomly out of the remaining zeros) and checking if
$s_j \geq t+1 \; \forall \; j \in [1,L]$ holds for every column. Formally, let $\mathcal{S}$ be a set of random binary
$K\times L$ matrices with row weight $r_i \geq t+1 \; \forall \; i \in [1,K]$ and $\sum_{i=1}^K r_i = \epsilon$. Then
\begin{equation*}
  \gamma \approx \hat{\gamma} = \frac{ |\{ S \in \mathcal{S} | s_j\geq t+1 \; \forall \; j \in [1,L]\}|}{|\mathcal{S}|}.
\end{equation*}
For $\epsilon = \epsilon_{\min}$ the exact number of stall patterns can then be estimated by
\begin{equation} \label{eq:Ntildemin} N_{K,L}^{\epsilon_{min}} \approx \hat{\gamma} \, \hat{N}_{K,L}^{\epsilon_{\min}} ,
\end{equation}
where $\hat{N}_{K,L}^{\epsilon} $ is given in \eqref{eq:amerrdist}.  Increasing the number of simulated matrices lets
this approximation approach the exact value, but only for stall patterns with $\epsilon = \epsilon_{\min}$. For the
estimation of \cite{stairfec} with larger $\epsilon$, not only the condition on the columns is neglected, but it is also
assumed, that randomly inserting the remaining the $\epsilon - \epsilon_{\min}$ errors always yields an unique
distribution. Employing \eqref{eq:Ntildemin} for $\epsilon > \epsilon_{\min}$, by performing the same steps and
inserting the remaining errors randomly, would not converge towards the exact value, but give an estimation of an upper
bound.  \lem{lem:cardZ} in the appendix allows for calculating the exact number of $K\times L$ matrices satisfying the
conditions on the rows by
\begin{equation} \label{eq:Ntilde} \tilde{N}_{K,L}^\epsilon = F(\epsilon,K).
\end{equation}
An approximation of the percentage $\gamma$ for which the conditions on the
columns hold, \emph{i.e.}, that represent a stall pattern, allows for the exact number to be approximated by
\begin{equation} \label{eq:gammaN} N_{K,L}^\epsilon \approx \hat{\gamma} \,\tilde{N}_{K,L}^\epsilon .
\end{equation}
Note that the complexity of this approach is much lower than the trivial approach of testing random $K \times L$ binary
matrices of weight $\epsilon$ on whether both the conditions on the rows and columns hold and dividing by the total
number $\mybinom{KL}{\epsilon}$ of such matrices.

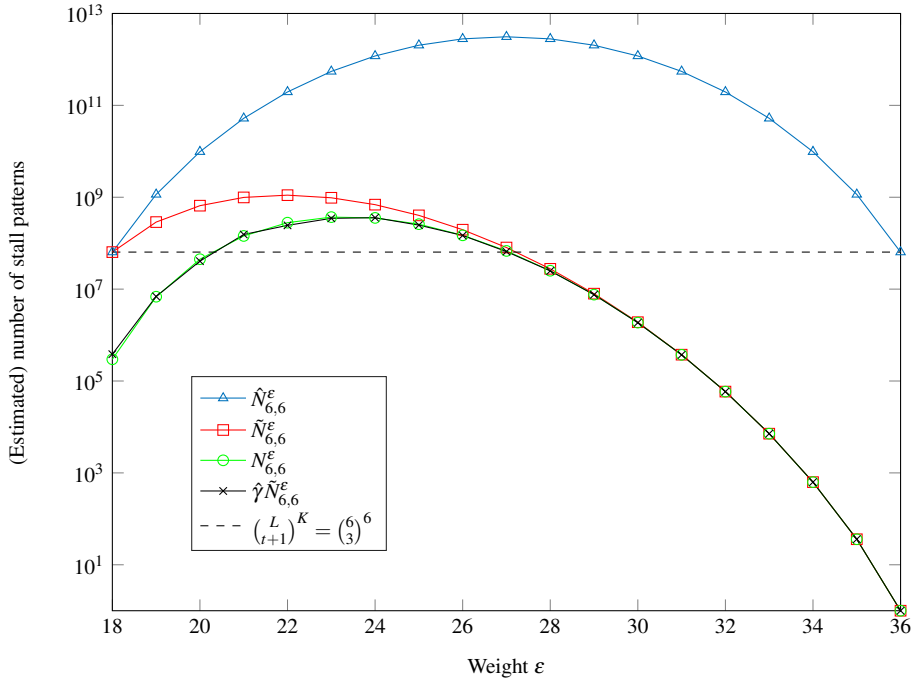
\begin{figure}
%
%
\definecolor{mycolor1}{rgb}{0.00000,0.44700,0.74100}%
\definecolor{mycolor2}{rgb}{0.85000,0.32500,0.09800}%
\begin{tikzpicture}

\begin{axis}[%
width=0.85\textwidth,
height=0.4\textheight,
at={(0in,0in)},
scale only axis,
xmin=18,
xmax=36,
xlabel={$\text{Weight }\epsilon$},
ymode=log,
ymin=1,
ymax=10000000000000,
yminorticks=true,
ylabel={(Estimated) number of stall patterns},
axis background/.style={fill=white},
legend style={legend cell align=left,align=left,draw=white!15!black, at={(0.1,0.1)},anchor=south west}
]
\addplot[color=mycolor1,solid,mark=triangle]
  table[row sep=crcr]{%
18 64000000 \\ 
19 1152000000 \\ 
20 9792000000 \\ 
21 52224000000 \\ 
22 195840000000 \\ 
23 548352000000 \\ 
24 1188096000000 \\ 
25 2036736000000 \\ 
26 2800512000000 \\ 
27 3111680000000 \\ 
28 2800512000000 \\ 
29 2036736000000 \\ 
30 1188096000000 \\ 
31 548352000000 \\ 
32 195840000000 \\ 
33 52224000000 \\ 
34 9792000000 \\ 
35 1152000000 \\ 
36 64000000 \\ 
};
\addlegendentry{$\hat{N}_{6,6}^\epsilon$};

\addplot[color=red,solid,mark=square]
  table[row sep=crcr]{%
18 64000000 \\ 
19 288000000 \\ 
20 655200000 \\ 
21 991200000 \\ 
22 1110150000 \\ 
23 973125000 \\ 
24 689600625 \\ 
25 402952500 \\ 
26 196449750 \\ 
27 80390500 \\ 
28 27648495 \\ 
29 7966440 \\ 
30 1907556 \\ 
31 374256 \\ 
32 58815 \\ 
33 7140 \\ 
34 630 \\ 
35 36 \\ 
36 1 \\
};
\addlegendentry{$\tilde{N}_{6,6}^\epsilon$};

\addplot[color=green,solid,mark=o]
table[row sep=crcr]{%
	18 297200	\\
	19 6840000	\\
	20 44789040	\\
	21 143675040\\
	22 279581400\\
	23 369435600\\
	24 354910350\\
	25 259181640\\
	26 148284540\\
	27 67832200	\\
	28 25131150	\\
	29 7588800	\\
	30 1867320	\\
	31 371520	\\
	32 58725	\\
	33 7140		\\
	34 630		\\
	35 36		\\
	36 1		\\
};
\addlegendentry{$N_{6,6}^\epsilon$ };

\addplot[color=black,solid,mark=x]
table[row sep=crcr]{%
	18 384000.0000000\\
	19 6912000.0000000\\
	20 40622400.0000000\\
	21 154627200.0000000\\
	22 244233000.0000000\\
	23 347405625.0000000\\
	24 358592325.0000000\\
	25 247009882.5000000\\
	26 147140862.7500000\\
	27 65679038.5000000\\
	28 24938942.4900000\\
	29 7576084.4400000\\
	30 1857959.5440000\\
	31 371261.9520000\\
	32 58815.0000000\\
	33 7140.0000000\\
	34 630.0000000\\
	35 36.0000000\\
	36 1.0000000\\
};
\addlegendentry{$\hat{\gamma}\,\tilde{N}_{6,6}^\epsilon$ };

\addplot [color=black,dashed]
  table[row sep=crcr]{%
18 64000000\\
36 64000000\\
};
\addlegendentry{$\binom{L}{t+1}^K = \binom{6}{3}^6$};
\end{axis}
\end{tikzpicture}%
  \caption{Comparison between upper bounds and accurate number of stall patterns $N_{6,6}^{\epsilon}$ for $K=L=6$ and
    $t=2$, where $\hat{N}_{6,6}^{\epsilon}$ is given in \eqref{eq:amerrdist} and $\tilde{N}_{6,6}^{\epsilon}$ is given
    in \eqref{eq:Ntilde}. The low complexity method of \eqref{eq:gammaN} labeled
    $\hat{\gamma} \, \tilde{N}_{6,6}^{\epsilon}$ is shown, where $\hat{\gamma}$ is obtained from evaluating $1000$
    matrices for each point.}
  \label{fig:estcomp66}
\end{figure}

\fig{fig:estcomp66} gives a comparison for $K=6$, $L=6$ and $t=2$ between the upper bounds of \eqref{eq:amerrdist} and
\eqref{eq:amDepsK} as well as the exact value obtained by \eqref{eq:amDepsKex} and the approximation of the exact value
given in \eqref{eq:gammaN}. As expected, $\hat{N}_{K,L}^\epsilon$ given by \eqref{eq:amerrdist} and
$\tilde{N}_{K,L}^\epsilon$ given by \eqref{eq:Ntilde} result in the same value for $\epsilon_{\min}$, but for large
$\epsilon$ only $\tilde{N}_{K,L}^\epsilon$ approaches the correct value $N_{K,L}^\epsilon$, while
$\hat{N}_{K,L}^{\epsilon}$ overestimates $N_{K,L}^{\epsilon}$ significantly. Further it can be seen that the
approximation of the exact value obtained by \eqref{eq:gammaN} is very close to the exact numbers, even testing
only $1000$ matrices per point.

\subsection{Simulation Results} \label{subsec:simres}

To show the improvement in performance of our new technique, performance conjectures based on the combination of
simulation and analytical analysis for a specific staircase code are presented. As the error floor for the parameters
proposed in \cite{stairfec} is already below $10^{-20}$ even without resolving any stall patterns, the results are
given for different parameters, chosen such that the encoder and decoder can employ a simpler structure than
in~\cite{stairfec}, by using less memory and lower complexity component code decoders. The size of the blocks is quartered
compared to~\cite{stairfec} by using a~$[n=510,k=491]$ extended~$t=2$ error-correcting BCH code, resulting in block
size~$\frac{n}{2} \times \frac{n}{2} = 255 \times 255$. The corresponding rate for these parameters
is~$R = \frac{n-m}{k-m} = \frac{236}{255}$, which is slightly lower than the rate~$R_{[1]} = \frac{239}{255}$
of~\cite{stairfec}.

Decoding with the regular decoder, as described in~\sect{subsec:stcdecoding}, results in an error floor
at~$\sim 2 \cdot 10^{-10}$ for~$p = 5 \cdot 10^{-3}$, which is well above the desired error floor of~$10^{-15}$. This
error floor was found by simulation, using a sliding window of size~$W = 7$. The variable~$\xi$ adjusting for the
difference between estimated error floor and simulated probability of a stall pattern occurring was determined to
be~$\xi = 1.6 \cdot 10^{-3}$.

To find the capability of the improved decoder to solve stall patterns, a dedicated channel was implemented. Stall
patterns of a certain size and weight are inserted at a random position within two blocks. At the output, it is
determined whether the stall pattern was resolved or not. The basic assumption of this dedicated channel is that the
decoder is able to resolve all surrounding errors. For this reason and to avoid unwanted effects, such as resolving of a
stall pattern through an undetected error event, no errors other than the ones belonging to the stall pattern were
inserted. The simulation results are given in~\tab{tab:simres}.
\begin{table}
  \vspace{-10pt}
  \caption{Simulation and estimation results of the error floor for a staircase code with block size $255 \times 255$
    and $t=2$ error-correcting $[n=510,k=491]$ component code of distance $d_{\min}=6$. $K$ and $L$ give the number of
    rows and columns involved in the stall pattern and $\epsilon$ denotes the weight. The percentage of solvable stall
    patterns was obtained in simulation by applying the bit-flip operation as described in
    \protect\algo{algo:slidingresolve} on a dedicated channel designed to insert stall patterns. $P_{C,old}$ and
    $P_{C,new}$ give the approximation of the contribution to the error floor of \eqref{eq:PCold} and \eqref{eq:PCnew}
    respectively. $P_{floor,new}$ gives the contribution considering the percentage of stall patterns that can be solved
    for the given parameters and the error floor is the sum of this column.}
  \begin{center}
    \bgroup \def\arraystretch{1.15}
    \begin{tabular}{M{0.5cm}M{0.5cm}M{0.5cm}M{1.5cm}M{2cm}M{2cm}M{2cm}M{2cm}M{2cm}}
      \hline\noalign{\smallskip}
      K & L & \epsilon & \% \; \text{Solved} & P_{C,old}  & P_{C,new}  & P_{floor,new}\\ \hline\noalign{\smallskip}
      3 & 3 & 9 & 100 \% & 1.2 \cdot 10^{-10} &	1.2 \cdot 10^{-10} & 0\\
      3 & 4 & 12 & 51 \% & 4.1 \cdot 10^{-15} &	4.1 \cdot 10^{-15} & 2.0 \cdot 10^{-15}\\
      4 & 3 & 12 & 56 \% & 9.0 \cdot 10^{-15} &	9.0 \cdot 10^{-15} & 3.9 \cdot 10^{-15}\\
      4 & 4 & 12 & 100 \% & 1.4 \cdot 10^{-10} &	1.3 \cdot 10^{-11} & 0\\
      4 & 4 & 13 & 100 \% & 4.1 \cdot 10^{-12} &	3.9 \cdot 10^{-13} & 0\\
      4 & 4 & 14 & 79 \% & 4.4 \cdot 10^{-14} &	2.0 \cdot 10^{-15} & 4.4 \cdot 10^{-16}\\
      5 & 5 & 15 & 100 \% & 1.0 \cdot 10^{-10} &	2.2 \cdot 10^{-12} & 0\\
      5 & 5 & 16 & 99.9 \% & 7.5 \cdot 10^{-12} &	1.7 \cdot 10^{-13} & 6.8 \cdot 10^{-17}\\
      5 & 5 & 17 & 97.4 \% & 2.3 \cdot 10^{-13} &	3.6 \cdot 10^{-15} & 9.3 \cdot 10^{-17}\\
      5 & 5 & 18 & 95.1 \% & 4.4 \cdot 10^{-15} & 3.4 \cdot 10^{-17} & 1.7 \cdot 10^{-18}\\
      6 & 6 & 18 & 99.9 \% & 8.4 \cdot 10^{-11} & 3.9 \cdot 10^{-13} & 2.5 \cdot 10^{-16}\\
      6 & 6 & 19 & 99.9 \% & 1.0 \cdot 10^{-11} & 6.2 \cdot 10^{-14} & 7.9 \cdot 10^{-17}\\
      6 & 6 & 20 & 98.9 \% & 6.2 \cdot 10^{-13} &	2.8 \cdot 10^{-15} & 3.1 \cdot 10^{-17}\\
      7 & 7 & 21 & 100 \% & 7.3 \cdot 10^{-11} &	7.5 \cdot 10^{-14} & 0 \\
      7 & 7 & 22 & 99.9 \% & 1.4 \cdot 10^{-11} &	2.2 \cdot 10^{-14} & 1.4 \cdot 10^{-17}\\
      7 & 7 & 23 & 99 \% & 1.3 \cdot 10^{-12} &	1.7 \cdot 10^{-15} & 1.8 \cdot 10^{-17}\\
      \noalign{\smallskip}\hline
    \end{tabular}
    \egroup
  \end{center}
  \label{tab:simres}
  \vspace{-10pt}
\end{table}

In Columns 1-3 of Table~\ref{tab:simres}, the size and weight of the respective stall patterns are given. The fourth
column gives the percentage of stall patterns that the improved decoder was able to solve. For each size, more
than~$2000$ stall patterns were inserted as described above. If any errors in the corresponding blocks were observed at
the decoder output, the stall pattern was counted as unsolved. The remaining columns give the contribution to the error
floor, as found by applying the different estimations. The column labeled $P_{C,old}$ gives the estimated contribution
of stall patterns of the respective size, obtained by applying the estimation from~\cite{stairfec}, given
in~\eqref{eq:PCold}. In column $P_{C,new}$, the corresponding values obtained by~\eqref{eq:PCnew} are given. The column
$P_{floor,new}$ shows the contribution to the error floor for the improved decoder which resolves the respective
percentage of stall patterns.
The new dominating contributor are the stall patterns of size $(3,4)$ and $(4,3)$, mainly due to the inability of the decoder to solve a large percentage of these. An illustration of such an unsolvable stall pattern was given in \fig{fig:43stpa}. Our simulations show that the resolving strategy of only partially inverting large stall patterns (\emph{i.e.}, those which are not covered by Theorem~\ref{theo:guarantee-solving}) resolves a large percentage of these (e.g. $99.9 \%$ of $(6,6)$ stall patterns with $\epsilon_{\min} = 18$).

The error floor of the staircase code with the given parameters is expected to be at $\berout \approx 9 \cdot 10^{-15}$
for a BSC crossover probability of $p = 5 \cdot 10^{-3}$. For comparison, the error floor of the same staircase code
employing the regular decoder lies at $\berout \approx 2 \cdot 10^{-10}$. The performance of the code in terms of the
gap to BSC capacity for the given rate is estimated to $1$dB and it achieves a net coding gain (NCG) of $9.16$dB at a
$\berout$ of $10^{-15}$.
\begin{figure}
%
%
\definecolor{mycolor1}{rgb}{0.00000,0.44700,0.74100}%
\definecolor{mycolor2}{rgb}{0.85000,0.32500,0.09800}%
\begin{tikzpicture}

\begin{axis}[%
width=0.85\textwidth,
height=0.4\textheight,
at={(0.758in,0.481in)},
scale only axis,
x dir=reverse,
xmode=log,
xmin=0.0001,
xmax=0.015,
xminorticks=true,
xlabel={$\berin$},
xmajorgrids,
xminorgrids,
ymode=log,
ymin=1e-20,
ymax=1e-06,
ytick={1e-20, 1e-19, 1e-18, 1e-17, 1e-16, 1e-15, 1e-14, 1e-13, 1e-12, 1e-11, 1e-10, 1e-09, 1e-08, 1e-07, 1e-06},
ylabel={$\berout$},
ymajorgrids,
axis background/.style={fill=white},
legend style={legend cell align=left,align=left,draw=white!15!black}
]
\addplot [color=mycolor1,solid,thick]
  table[row sep=crcr]{%
0.0051	1e-06\\
0.005	3.22240228358985e-10\\
0.0045	1.42986397985259e-10\\
0.004	6.0111110634193e-11\\
0.0035	2.3708143083054e-11\\
0.003	8.65512691443788e-12\\
0.0025	2.86993773057473e-12\\
0.002	8.41183506738895e-13\\
0.0015	2.09330165665639e-13\\
0.001	4.15543695445261e-14\\
0.00085	2.41449435708062e-14\\
0.00065	1.11139841758619e-14\\
0.0005	5.93671978288915e-15\\
0.0004	3.81298155109986e-15\\
0.0003	2.39521798947046e-15\\
0.0002	1.46798918150775e-15\\
0.0001	8.7528921107261e-16\\
};
\addlegendentry{Regular Decoder};

\addplot [color=mycolor2,solid,thick]
  table[row sep=crcr]{%
0.0051	1e-06\\
0.005	7.04310242822496e-15\\
0.0045	2.67774340692394e-15\\
0.004	9.40153483337402e-16\\
0.0035	3.0027029907226e-16\\
0.003	8.55315867398588e-17\\
0.0025	2.11578892675201e-17\\
0.002	4.3796126506043e-18\\
0.0015	7.18826833831139e-19\\
0.001	8.61411416196282e-20\\
0.00085	4.20970076501034e-20\\
0.00065	1.5096090349327e-20\\
0.0005	6.57965171643522e-21\\
0.0004	3.65794247019023e-21\\
0.0003	1.97359725235181e-21\\
0.0002	1.03004636643944e-21\\
0.0001	5.18057902334645e-22\\
};
\addlegendentry{Improved Decoder};

\addplot [color=red,solid,thick]
  table[row sep=crcr]{%
0.00906217656129442	1e-20\\
0.00906217656129442	1e-06\\
};
\addlegendentry{BSC Limit $(C=236/255)$};

\addplot [color=black,dashed,forget plot]
  table[row sep=crcr]{%
0.015	1e-15\\
0.0001	1e-15\\
};
\end{axis}
\end{tikzpicture}%
  \caption{Conjectured performance comparison between the decoder given in \protect\cite{stairfec} and the improved
    decoder introduced in this work for a staircase code with block size $255 \times 255$ and $t=2$ error-correcting
    ${[n=510,k=491]}$ component code of distance $d_{\min}=6$. $\berin$ denotes the input bit error rate, \emph{i.e.}, the BSC
    crossover probability, and $\berout$ denotes the output bit error rate.}
  \label{fig:errflocomp}
\end{figure}
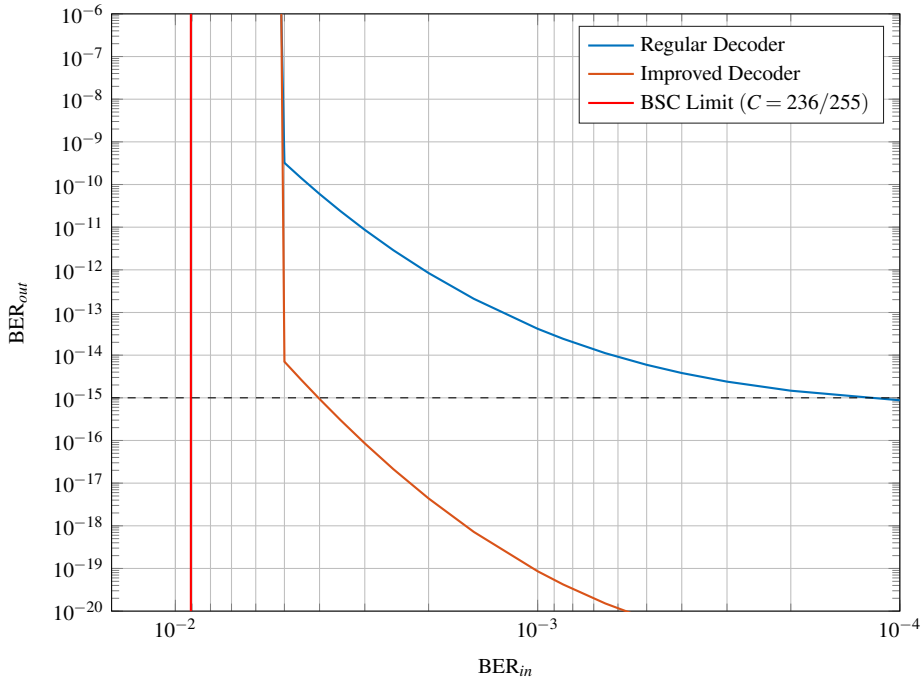
For comparison, the code presented in~\cite{stairfec} achieves a NCG of $9.41$dB and a gap to
capacity of $0.56$dB, however operating on much larger blocks.

\fig{fig:errflocomp} shows a comparison between the expected performance when using the regular decoder and the improved
decoder.  Note that the simulations on the ability of the improved decoder to resolve stall patterns were performed
under the assumption that the decoder is able to isolate all stall patterns, \emph{i.e.}, resolve all errors which are not part
of stall patterns. Furthermore, the assumption made in~\cite{stairfec} that stall patterns dominate the error floor, is
adopted. Assuming that the window size of the regular decoder is chosen such that only the first (lowest indexed) block
is free of all correctable errors, the sliding window size of the \emph{improved} decoder has to be increased by $3$
(see \sect{subsec:bitflipimplement}) to obtain a sufficient number of blocks that can be considered to be error-free
with the exception of stall patterns.

\section{Conclusion} \label{sec:conclusion}

Staircase codes are a powerful code construction for high-speed optical networks which perform close to the BSC
capacity for high rates. The decoding based on a hard decision component code provides efficient implementations, even at high data
rates. However, the usable block size is limited by requiring an error floor of $10^{-15}$ in optical communications. In
this work, an improved decoder was proposed, lowering the error floor significantly while increasing complexity only
marginally. This improvement enables the use of smaller block sizes at comparable rates which effectively lowers the
memory requirement and the component decoder complexity. Furthermore, an analysis of the error floor was presented
resulting in a more accurate estimation, which is especially important for the improved decoder.

Future work of interest includes an FPGA implementation capable of simulating the code with the given parameters down
to its estimated error floor to show that assumptions made on the capability of the decoder to correct errors
surrounding a stall pattern hold.

%

\begin{acknowledgements}
  The authors would like to thank Gerhard Kramer and Frank Kschischang for the valuable discussions and the anonymous reviewers for their comments that helped improve the quality and presentation of this work.
\end{acknowledgements}


\bibliographystyle{spmpsci} 
\bibliography{DCC.bib} 

\section*{Appendix}

\begin{lemma} \label{lem:cardZex}
  Denote by $r_1,...,r_K$ the weight of the rows and by $s_1,...,s_L$ the weight of the columns of a matrix $\mathbf{A}
\in \field{2}^{K\times L}$. The cardinality of the set
  \begin{equation}
    \label{eq:setA}
    \mathcal{Z}_{K,L}^{\epsilon} =    \left\{ \mathbf{A} | \wt(\mathbf{A}) = \epsilon, r_i \geq t+1 \; \forall \; i \in [1,K], s_i \geq t+1 \; \forall \; i \in [1,L]  \right\}
  \end{equation}
  is given by
  \begin{equation} \label{eq:amDepsKex} | \mathcal{Z}_{K,L}^\epsilon | = D(\epsilon,K,L,0)
  \end{equation}
  with
  \begin{equation}
    \label{eq:Dabcd}
    D(a,b,c,d) = \sum_{w_{b+d}=\max\left\{t+1,a-c(b-1)\right\}}^{\min\left\{a-(b-1)(t+1),c\right\}} D(a-w_{b+d},b-1,c,d)
  \end{equation}
  and
  \begin{align}
    D(w_{d+1},1,c,0) &= D(\epsilon,L,K,K), \label{eq:rec2}\\
    D(w_{d+1},1,c,K) &= \mathcal{A}([w_1,...,w_K],[w_{K+1},...,w_{K+L}]), \label{eq:recfin}
  \end{align}
  where $\mathcal{A}(\cdot,\cdot)$ is given in \cite{mat01}.
\end{lemma}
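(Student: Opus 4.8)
The plan is to prove the identity by partitioning $\mathcal{Z}_{K,L}^\epsilon$ according to the row- and column-weight profile of each matrix and then verifying that the nested recursion $D$ enumerates exactly these profiles, weighting each by the appropriate value of $\mathcal{A}$. First I would observe that every matrix $\mathbf{A} \in \mathcal{Z}_{K,L}^\epsilon$ is counted exactly once when we group matrices by their row-weight vector $\mathbf{r} = (r_1,\ldots,r_K)$ and column-weight vector $\mathbf{s} = (s_1,\ldots,s_L)$. By the defining property of $\mathcal{A}(\cdot,\cdot)$ from \cite{mat01}, the number of binary $K \times L$ matrices with exactly these row and column weights is $\mathcal{A}(\mathbf{r},\mathbf{s})$, so that
\begin{equation*}
  |\mathcal{Z}_{K,L}^\epsilon| = \sum_{\mathbf{r},\mathbf{s}} \mathcal{A}(\mathbf{r},\mathbf{s}),
\end{equation*}
where the sum ranges over all pairs with $t+1 \leq r_i \leq L$ for all $i$, $t+1 \leq s_j \leq K$ for all $j$, and $\sum_i r_i = \sum_j s_j = \epsilon$.

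The second step is to show that $D(\epsilon,K,L,0)$ computes precisely this double sum. I would interpret the recursion in two phases governed by the offset $d$: the phase $d=0$ successively fixes the row weights (identifying the summation variable $w_i$ with $r_i$ for $i \in [1,K]$), while the phase $d=K$ fixes the column weights (identifying $w_{K+j}$ with $s_j$ for $j \in [1,L]$). The transition base case \eqref{eq:rec2} fires when a single row remains to be assigned: at that moment the argument $a$ equals the forced final row weight $w_1 = \epsilon - \sum_{i=2}^K w_i$, and the call $D(\epsilon,L,K,K)$ restarts the enumeration for the columns with the full budget $\epsilon$. The terminal base case \eqref{eq:recfin} fires when a single column remains, at which point all weights are determined and $\mathcal{A}$ is evaluated on the assembled vectors. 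Proceeding by induction on $b$ within each phase, I would argue that $D(a,b,c,d)$ equals the sum of the relevant $\mathcal{A}$-values over all completions of the $b$ remaining weights, so that unrolling the full recursion reproduces the double sum above.

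The main work---and the step I expect to be most delicate---is verifying that the summation limits in \eqref{eq:Dabcd} enumerate exactly the feasible weights. The lower limit $\max\{t+1,\,a-c(b-1)\}$ enforces both the minimum-weight constraint $w_{b+d} \geq t+1$ and the requirement that the remaining $b-1$ weights, each capped at $c$, can still absorb the residual $a - w_{b+d}$; the upper limit $\min\{a-(b-1)(t+1),\,c\}$ simultaneously caps $w_{b+d}$ at the maximal row (resp.\ column) length $c$ and guarantees that at least $t+1$ can be allocated to each of the remaining $b-1$ positions. I would check that these two conditions are jointly equivalent to the existence of a valid completion, and in particular that when $b$ is decremented to its base value the forced last weight automatically lands in $[t+1,c]$. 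Once this is confirmed, neither a legal weight vector is skipped nor an infeasible one included, the induction closes, and the identity $|\mathcal{Z}_{K,L}^\epsilon| = D(\epsilon,K,L,0)$ follows.
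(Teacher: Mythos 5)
Your proposal is correct and follows essentially the same route as the paper's proof: interpret $D$ as a two-phase recursion that enumerates the row-weight vector (offset $d=0$) and then the column-weight vector (offset $d=K$), justify the summation limits as exactly the feasibility conditions $t+1 \leq w \leq c$ together with the requirement that the residual weight can still be distributed over the remaining $b-1$ rows or columns, and terminate with $\mathcal{A}(\cdot,\cdot)$ counting the matrices for each fixed weight profile. Your write-up is somewhat more explicit than the paper's (stating the double-sum identity $|\mathcal{Z}_{K,L}^\epsilon| = \sum_{\mathbf{r},\mathbf{s}} \mathcal{A}(\mathbf{r},\mathbf{s})$ up front and noting that the forced final weight lands in $[t+1,c]$), but the underlying argument is the same.
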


\begin{proof}
  The inputs of $D(a,b,c,d)$ give the number of non-zero positions $a$ that are to be distributed within the rows or
  columns (determined by an index offset $d$) of a matrix of size $b\times c$, such that the conditions of
  (\ref{eq:setA}) are fulfilled. Each recursion step of \eqref{eq:amDepsKex} determines the number of non-zero positions
  in one row, \emph{i.e.}, its Hamming weight.
  Trivially, the weight $r_i$ of a row of an $b \times c $ matrix is upper bounded by $r_i \leq c$. It follows that
  $r_b \geq a-c(b-1)$, because otherwise the resulting matrix cannot be of weight $a$, even if all remaining rows are of
  maximal weight. By definition $r_b \geq t+1$ and as this condition also has to hold for the other $b-1$ rows, the
  Hamming weight of the $b$-th row can be at most $a-(b-1)(t+1)$. Combining these conditions gives the limits of the sum
  in (\ref{eq:Dabcd}). The problem is reduced to distributing the remaining $a-r_b$ non-zero positions among the rows of
  a $b-1 \times c$ matrix, where the sum limits guarantee that the conditions on the weight of the rows can be met. When
  the case \eqref{eq:rec2} is reached, the remaining non-zero positions have to be in the last row and its weight $r_1$ is
  therefore given by the first function input. The recursion giving the weights $w_{K+1},..., w_{K+L}$ of the $L$
  columns is initiated, where $d=K$ gives the offset in the indices of the weights. The same arguments as above hold for
  the limits of the sum. When the last column is reached , \emph{i.e.}, the case \eqref{eq:recfin}, the function
  $\mathcal{A}([w_1,...,w_K],[w_{K+1},...,w_{K+L}])$ from \cite{mat01} (see also \cite{Perez2002}) gives the number of
  matrices corresponding to the unique weight vector $\mathbf{w}$ of size $K+L$.
\end{proof}
Note that while this notation offers a mathematically correct way to calculate the cardinality, when implemented it
would be beneficial to call the function $\mathcal{A}(\rv{},\sv{})$ as few times as possible, as it is the
computationally most complex part. As permutations of the input vectors lead to the same result, many calls of this
function can be avoided by, \emph{e.g.}, maintaining a look-up table.
\begin{lemma} \label{lem:cardZ} Denote by $r_1,...,r_K$ the weight of the rows of a matrix
  $\mathbf{A} \in \field{2}^{K\times L}$. The cardinality of the set
  \begin{equation*}
    \mathcal{Z}_{K,L}^{\epsilon} = \left\{ \matr{A}{K}{L} | \wt(\mathbf{A}) = \epsilon, r_i \geq t+1 \;\; \forall \; i \in [1,K]  \right\}
  \end{equation*}
  is given by
  \begin{equation} \label{eq:amDepsK} | \mathcal{Z}_{K,L}^\epsilon | = F(\epsilon,K)
  \end{equation}
  with
  \begin{align*}
    F(a,b) &= \sum_{j=\max\left\{t+1,a-L(b-1)\right\}}^{\min\left\{a-(b-1)(t+1),L\right\}} \mybinom{L}{j} F(a-j,b-1)
  \end{align*}
  and $F(a,1) = \mybinom{L}{a}$.

\end{lemma}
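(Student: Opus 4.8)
The plan is to exploit the fact that, in contrast to Lemma~\ref{lem:cardZex}, the set $\mathcal{Z}_{K,L}^{\epsilon}$ is constrained only on the row weights, with no condition coupling the columns. Consequently the rows of $\mathbf{A}$ may be chosen independently of one another, subject only to the global weight constraint $\sum_{i=1}^{K} r_i = \epsilon$. A row of prescribed weight $j$ can be realized in exactly $\mybinom{L}{j}$ ways, so the cardinality factors as a sum over admissible (ordered) weight compositions:
\[
  |\mathcal{Z}_{K,L}^{\epsilon}| = \sum_{\substack{r_1 + \cdots + r_K = \epsilon \\ r_i \geq t+1 \; \forall \, i}} \; \prod_{i=1}^{K} \mybinom{L}{r_i} .
\]
First I would read $F(a,b)$ as this same quantity for $b$ rows of length $L$ carrying total weight $a$, each of weight at least $t+1$; the claim is then precisely $|\mathcal{Z}_{K,L}^{\epsilon}| = F(\epsilon,K)$.

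To establish the recurrence I would peel off the weight $j = r_b$ of the last row. Summing over this choice, the factor $\mybinom{L}{j}$ enumerates the configurations of that row, while the remaining $b-1$ rows must carry total weight $a-j$ with each still of weight at least $t+1$, contributing $F(a-j,b-1)$. This yields $F(a,b) = \sum_j \mybinom{L}{j} F(a-j,b-1)$, matching the stated form once the summation limits are justified. The one point requiring care is the range of $j$: the row constraint gives $j \geq t+1$ and $j \leq L$, and feasibility of the residual weight gives two further bounds. Since the $b-1$ remaining rows absorb at most $L(b-1)$ ones, we need $j \geq a - L(b-1)$; since they must collectively carry at least $(b-1)(t+1)$ ones, we need $j \leq a - (b-1)(t+1)$. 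Intersecting these four inequalities gives exactly $\max\{t+1,\,a-L(b-1)\} \leq j \leq \min\{a-(b-1)(t+1),\,L\}$, the limits appearing in the lemma. I expect this bookkeeping to be the main (though mild) obstacle, and it is essentially identical to the argument given for $D(a,b,c,d)$ in Lemma~\ref{lem:cardZex}.

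Finally I would settle the base case. When $b=1$ the entire weight $a$ must reside in a single row, realizable in $\mybinom{L}{a}$ ways, so $F(a,1) = \mybinom{L}{a}$. Note that the upper limit $j \leq a-(b-1)(t+1)$ forces $a-j \geq (b-1)(t+1)$ at each step, so upon reaching $b=1$ the residual weight satisfies $a \geq t+1$; hence $\mybinom{L}{a}$ correctly counts the admissible single rows and the constraint $r_1 \geq t+1$ is automatically respected rather than needing to be imposed separately. Unwinding the recursion from $F(\epsilon,K)$ down to this base case reproduces the factored sum displayed above, which completes the proof.
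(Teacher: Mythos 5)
Your proof is correct and follows essentially the same route as the paper's: peel off one row's weight $j$, count its $\mybinom{L}{j}$ realizations, recurse on the remaining rows, and justify the summation limits by feasibility of the residual weight. Your explicit factored sum over weight compositions and the observation that the base case automatically satisfies $r_1 \geq t+1$ make the bookkeeping slightly more transparent than the paper's version, but the argument is the same.
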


\begin{proof}
  There are $\mybinom{L}{h}$ unique binary vectors $\rv{}$ of length $L$ with $\wt(\rv{}) = h$. Summing over all
  $t+1 \leq h \leq L$ gives the number of unique binary vectors $\rv{}$ of length $L$, such that
  $t+1 \leq \wt(\rv{}) \leq L$, as
  \begin{equation}
    \sum_{j=t+1}^{L}\mybinom{L}{j} \; . \label{eq:onerow}
  \end{equation}
  Let there be $K$ vectors of weight
  \begin{align}
    t+1 \leq r_i &\leq L \; \forall \; i \in [1,K] \label{eq:whrK}
  \end{align} and let $\sum_{i=1}^K r_i = \epsilon$. For every distribution with $r_1 = \delta$, it holds that $\sum_{i=2}^K = \epsilon-\delta$, and by \eqref{eq:whrK} we obtain
  \begin{equation}
    \max \{t+1,\epsilon-L(K-1)\} \leq \delta \leq \min \{\epsilon -(K-1)(t+1),L\} \label{eq:recurbounds}
  \end{equation}
  as the region for $\delta$ for which the constraints on the other rows are still satisfiable.

  By setting the weight $r_1 = \delta$, the problem is reduced to finding the number of unique distributions of weight
  $\sum_{i=2}^K = \epsilon-\delta$ over $K-1$ rows, such that \eqref{eq:whrK} holds.  This results in the recursive
  equation
  \begin{align*}
    F(a,b) &= \sum_{j=\max\left\{t+1,a-L(b-1)\right\}}^{\min\left\{a-(b-1)(t+1),L\right\}} \mybinom{L}{j} D(a-j,b-1)
  \end{align*}
  and $F(a,1) = \mybinom{L}{a}$.  The partial term $\mybinom{L}{j}$ is given in \eqref{eq:onerow}, the bounds of the sum
  in \eqref{eq:recurbounds}. The partial term $F(a-j,b-1)$ results in the number of unique distributions of the reduced
  problem. When $b=1$ the last row is reached, which has to take the remaining difference between $\epsilon$ and the
  combined weight of the previous rows. The bounds on the sums assure, that the constraints for the last row are always
  met.
\end{proof}

\end{document}